\documentclass[conference,letterpaper]{IEEEtran}
\addtolength{\topmargin}{9mm}
\usepackage{cite}
\usepackage{amsmath,amssymb,amsfonts,amsthm}
\usepackage{ifthen}
\usepackage{tikz}
\usepackage{cite}
\usepackage{url}
\usepackage{titlesec}
\usepackage{subfig}
\usepackage{blkarray}
\usepackage{dsfont}
\usepackage[mathscr]{euscript}
\usepackage{enumitem}
\usepackage{algorithm}
\usepackage[noend]{algpseudocode}
\usepackage{blkarray}
\newtheorem{theorem}{Theorem}[section]

\newtheorem{proposition}{Proposition}[section]
\newtheorem{lemma}{Lemma}[section]
\theoremstyle{remark}

\theoremstyle{definition}
\newtheorem{definition}{Definition}[section]

\usepackage{subcaption}
\usepackage{blkarray}
\usepackage{dsfont}

\usepackage{url}
\usepackage{tcolorbox}
\usepackage{arydshln}
\usepackage{mathtools}
\usepackage{dsfont}

\usepackage{xcolor}

\usepackage{xcolor}

\definecolor{brickred}{cmyk}{0,0.89,0.94,0.28}
\definecolor{goldenrod}{cmyk}{0,0.10,0.84,0}
\definecolor{purple}{cmyk}{0.45,0.86,0,0}
\definecolor{rawsienna}{cmyk}{0,0.72,1,0.45}
\definecolor{olivegreen}{cmyk}{0.64,0,0.95,0.40}
\definecolor{peach}{cmyk}{0,0.5,0.7,0}
\definecolor{darkolive}{rgb}{0.,0.4,0.}
\colorlet{grey}{gray!40}

\global\long\def\E{\mathbb{E}}

\interdisplaylinepenalty=2500

\usepackage[cmintegrals]{newtxmath}
\newlist{mylist}{enumerate}{3}
\setlist[mylist]{label={}}
\usepackage{graphicx}
\usepackage{textcomp}
\def\BibTeX{{\rm B\kern-.05em{\sc i\kern-.025em b}\kern-.08em
		T\kern-.1667em\lower.7ex\hbox{E}\kern-.125emX}}

\ifCLASSINFOpdf
\else
\fi

\hyphenation{op-tical net-works semi-conduc-tor}

\begin{document}
\title{Bounding User Contributions for User-Level Differentially Private Mean Estimation}
\author{
	\IEEEauthorblockN{V.~Arvind~Rameshwar}
	\IEEEauthorblockA{Department of Electrical Engineering\\ IIT Madras, Chennai, India\\
		Email: \url{arvind@ee.iitm.ac.in}}
	\and
	\IEEEauthorblockN{Anshoo Tandon}
	\IEEEauthorblockA{IUDX Program Unit\\IISc, Bengaluru, India\\
		Email: \url{anshoo.tandon@gmail.com}}
}
\IEEEoverridecommandlockouts
\maketitle

\begin{abstract}
	We revisit the problem of releasing the sample mean of bounded samples in a dataset, privately, under user-level $\varepsilon$-differential privacy (DP). We aim to derive the optimal method of preprocessing data samples, within a canonical class of processing strategies, in terms of the estimation error. Typical error analyses of such \emph{bounding} (or \emph{clipping}) strategies in the literature assume that the data samples are independent and identically distributed (i.i.d.), and sometimes also that all users contribute the same number of samples (data homogeneity)---assumptions that do not accurately model real-world data distributions. Our main result in this work is a precise characterization of the preprocessing strategy that gives rise to the smallest \emph{worst-case} error over all datasets -- a \emph{distribution-independent} error metric -- while allowing for data heterogeneity. We also show via experimental studies that even for i.i.d. real-valued samples, our clipping strategy performs much better, in terms of \emph{average-case} error, than the widely used bounding strategy of Amin et al. (2019).
\end{abstract}
\section{Introduction}

In this article, we work on the fundamental problem of processing bounded, potentially vector-valued samples in a dataset, for the release of a private estimate of the sample mean. In particular, we work within the framework of ``user-level" differential privacy \cite{userlevel,hetero}, which is a generalization of the now widely adopted framework of differential privacy (DP) \cite{dworkroth, vadhan2017} for the design and analysis of privacy-preserving algorithms. Loosely, user-level DP guarantees the privacy of a ``user", who could contribute more than one sample, by ensuring the statistical indistinguishability of outputs of the algorithm to changes in the user's samples. User-level DP has practical relevance for inference tasks on real-world datasets, such as traffic datasets, datasets of user expenditures, and time series data, where different users contribute potentially different numbers of samples (data heterogeneity) \cite{usereg3,usereg4}, and more recently, in federated learning (FL) applications (see, e.g., \cite[Sec. 4]{kairouz-survey}). 


There are two key requirements of such user-level DP mechanisms for mean estimation. Firstly, the mechanisms must be designed to work with heterogeneous data. Secondly, one would like reliable reconstruction of the true sample mean, even when the data samples are non-i.i.d. (independent and identically distributed). Our focus is hence to  characterize an error metric, which is independent of the underlying data distribution and can be explicitly computed and optimized, for heterogeneous data.


In this article, we confine our attention to user-level (pure) $\varepsilon$-DP algorithms for mean estimation. A key subroutine in such mechanisms \cite{userlevel,hetero-user-level,hist-user-level,amin} is the preprocessing of the data samples for the release of an estimate of the sample mean, which requires the addition of less noise for privacy, as against releasing a noised version of the true sample mean. Such a preprocessing procedure (or ``bounding" or ``clipping" strategy \cite{amin}) either drops selected samples, or projects the samples to a ``high-probability subset". While it is usually easy to establish that such mechanisms are differentially private, an analysis of their ``utility", or the error in estimation of the true statistic, often relies on distributional assumptions about the dataset. 

In this work, following \cite{dp_spcom,dp_preprint,tit-preprint}, we define and explicitly compute the \emph{worst-case error}, over all datasets, of general preprocessing (or bounding) strategies. This error metric is natural in settings with arbitrarily correlated data, where a user potentially ascribes his/her error tolerance to the worst dataset that the statistic is computed on. Furthermore, this error metric is distribution-independent and, as we show, is computable under data heterogeneity too. We then explicitly identify the bounding strategy that results in the smallest worst-case error. Interestingly, we also observe from experimental studies that for scalar samples, our clipping strategy also gives rise to much smaller errors \emph{on average} compared to that in \cite{amin}, for selected dataset sizes, when the samples are drawn i.i.d. according to common distributions.

\section{Notation and Preliminaries}
\label{sec:prelim}
\subsection{Notation}
The notation $\mathbb{N}$ denotes the set of positive natural numbers. For $n\in \mathbb{N}$, the notation $[n]$ denotes the set $\{1,2,\ldots,n\}$. 
Further, given reals $a,b$ with $a\leq b$, we define $\Pi_{[a,b]}(x):= \min\{\max\{x,a\},b\}$, for $x\in \mathbb{R}$. For a vector $\mathbf{x}\in \mathbb{R}^d$, we define its $\ell_p$-norm $\lVert \mathbf{x}\rVert_p$, for an integer $p\in \mathbb{N}$ to be $(\sum_{i=1}^d |x_i|^p)^{1/p}$, with $\lVert \mathbf{x}\rVert_\infty:= \max_{1\leq i\leq d} |x_i|$. For a given set $\mathcal{X}\subseteq \mathbb{R}^d$ and a vector $\mathbf{a}$, we define, with some abuse of notation, $\mathcal{X}(\mathbf{a})$ to be an $\ell_1$-projection $\arg \min_{\mathbf{b}\in \mathcal{X}} \lVert \mathbf{a}-\mathbf{b}\rVert_1$. Given an integer $d\geq 1$ and a real $M\geq 0$, we define $\Delta_M$ to be the $d$-simplex, i.e., $\Delta_M:= \{\mathbf{a}\in \mathbb{R}^d:\ \sum_{i\leq d} a_i\leq M,\ a_i\geq 0,\ \text{for all $i\in [d]$}\}$, and the set $\delta_M:= \{\mathbf{a}\in \mathbb{R}^d:\ \sum_{i\leq d} a_i= M,\ a_i\geq 0,\ \text{for all $i\in [d]$}\}$. Further, for $0\leq \alpha\leq \beta$, we define, with some abuse of terminology, the ``annulus" $\mathsf{A}_{\alpha,\beta}$ as $\mathsf{A}_{\alpha,\beta}:= (\Delta_\beta \setminus \Delta_\alpha) \cup \delta_\alpha$. 

We use the notation $\text{Lap}(b)$ to refer to the zero-mean Laplace distribution with standard deviation $\sqrt{2}b$, the notation $\text{Unif}((0,a])$ to denote the uniform distribution on the interval $(0,a]$, and the notation $\mathcal{N}(\mu, v)$ to denote the Gaussian distribution with mean $\mu$ and variance $v$.




\subsection{Problem Formulation}
\label{sec:problem}
Let $L$ be the number of users present in the dataset. For every user $\ell\in [L]$, let the number of records contributed by the user be denoted by $m_\ell$, and set $m^\star:= \max_{\ell\in [L]} m_\ell$ and $m_\star:= \min_{\ell\in [L]} m_\ell$. We assume that $L$ and the collection $\{m_\ell: \ell \in [L]\}$ are known. Now, for a given user $\ell\in [L]$, let $\left\{\mathbf{x}_j^{(\ell)}: j\in [m_\ell]\right\}$ denote the collection of (potentially arbitrary) bounded samples contributed by the user, where each $\mathbf{x}_j^{(\ell)}\in \mathbb{R}^d$, for some dimension $d\geq 1$. We assume, as is common for most applications \cite{dpsgd,dp_spcom}, that {$ \mathbf{x}_j^{(\ell)} \in \Delta_U$}, for all $\ell\in [L]$, $j\in [m_\ell]$, where $U> 0$ is known. Call the dataset as $\mathcal{D} = \left\{\left(\ell,\mathbf{x}_j^{(\ell)}\right): \ell \in [L], j\in [m_\ell]\right\}$.  

We are interested in releasing the sample mean
\[
	\label{eq:f}
f = f(\mathcal{D}):= \frac{1}{\sum_{\ell'=1}^L m_{\ell'}}\cdot \sum_{\ell=1}^L \sum_{j=1}^{m_\ell} \mathbf{x}_j^{(\ell)}.
\]



\subsection{Differential Privacy}
\label{sec:dp}
Consider datasets $\mathcal{D}_1 = \left\{\left({\ell},\mathbf{x}_j^{(\ell)}\right): \ell \in [L], j\in [m_\ell]\right\}$ and $\mathcal{D}_2 = \left\{\left(\ell,\overline{\mathbf{x}}_j^{(\ell)}\right): \ell \in [L], j\in [m_\ell]\right\}$ consisting of the same collection of users $[L]$, with each user contributing the same number, $m_\ell$, of  data values. Let $\mathsf{D}$ denote a universal set of such datasets. We say that $\mathcal{D}_1$ and $\mathcal{D}_2$ are ``user-level neighbours'' if there exists $\ell_0\in [L]$ such that $\left(\mathbf{x}^{(\ell_0)}_{1},\ldots, \mathbf{x}^{(\ell_0)}_{m_{\ell_0}}\right)\neq \left(\overline{\mathbf{x}}^{(\ell_0)}_{1},\ldots, \overline{\mathbf{x}}^{(\ell_0)}_{m_{\ell_0}}\right)$, with $\left(\mathbf{x}^{(\ell)}_{1},\ldots, \mathbf{x}^{(\ell)}_{m_{\ell}}\right)$ equal to $ \left(\overline{\mathbf{x}}^{(\ell)}_{1},\ldots, \overline{\mathbf{x}}^{(\ell)}_{m_{\ell}}\right)$, for all $\ell\neq \ell_0$. 

\begin{definition}
	For a fixed $\varepsilon>0$, a mechanism $M: \mathsf{D}\to \mathbb{R}^d$ is  user-level $\varepsilon$-DP if for every pair of user-level neighbours $\mathcal{D}_1, \mathcal{D}_2$ and for every measurable subset $Y \subseteq \mathbb{R}^d$, we have that
	$
	\Pr[M(\mathcal{D}_1) \in Y] \leq e^\varepsilon \Pr[M(\mathcal{D}_2) \in Y].
	$
\end{definition}
\begin{definition}
	Given a function $g: \mathsf{D}\to \mathbb{R}^d$, we define its user-level sensitivity $\Delta_g$ as
	$
	\Delta_g:= \max_{\mathcal{D}_1,\mathcal{D}_2\ \text{u-l nbrs.}} \lVert g(\mathcal{D}_1) - g(\mathcal{D}_2)\rVert_1,
	$
	where the maximization is over datasets that are user-level neighbours.
\end{definition}
We use the terms ``sensitivity'' and ``user-level sensitivity'' interchangeably. The next result is well-known \cite[Prop. 1]{dwork06}.

\begin{theorem}
	\label{thm:dp}
	For any $g: \mathsf{D}\to \mathbb{R}^d$, the mechanism $M^{\text{Lap}}_g: \mathsf{D}\to \mathbb{R}$ defined by
	$
	M^{\text{Lap}}_g(\mathcal{D}_1) = g(\mathcal{D}_1)+\mathbf{Z},
	$
	where $\mathbf{Z} = (Z_1,\ldots,Z_d)$ is such that $Z_i\stackrel{\text{i.i.d.}}{\sim} \text{Lap}(\Delta_g/\varepsilon)$, $i\in [d]$, and is independent of $\mathcal{D}$, is user-level $\varepsilon$-DP.
\end{theorem}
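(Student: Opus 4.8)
The plan is to reduce the statement to a pointwise bound on the ratio of output densities, which is the classical Laplace-mechanism argument adapted to the user-level sensitivity $\Delta_g$. First I would note that if $\Delta_g = 0$ then $g$ takes the same value on every pair of user-level neighbours and the claim is immediate, so assume $\Delta_g > 0$. Then, for any dataset $\mathcal{D}$, the random vector $M^{\text{Lap}}_g(\mathcal{D}) = g(\mathcal{D}) + \mathbf{Z}$ is absolutely continuous with respect to Lebesgue measure on $\mathbb{R}^d$, with density
\[
p_{\mathcal{D}}(\mathbf{y}) = \prod_{i=1}^d \frac{\varepsilon}{2\Delta_g}\exp\!\left(-\frac{\varepsilon\,|y_i - g(\mathcal{D})_i|}{\Delta_g}\right),
\]
because the coordinates $Z_i$ are i.i.d. $\text{Lap}(\Delta_g/\varepsilon)$ and independent of $\mathcal{D}$.

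Next, for a pair of user-level neighbours $\mathcal{D}_1,\mathcal{D}_2$, I would form the likelihood ratio at an arbitrary $\mathbf{y}\in\mathbb{R}^d$:
\[
\frac{p_{\mathcal{D}_1}(\mathbf{y})}{p_{\mathcal{D}_2}(\mathbf{y})} = \exp\!\left(\frac{\varepsilon}{\Delta_g}\sum_{i=1}^d\Big(|y_i - g(\mathcal{D}_2)_i| - |y_i - g(\mathcal{D}_1)_i|\Big)\right).
\]
The key estimate is the reverse triangle inequality applied coordinatewise, $|y_i - g(\mathcal{D}_2)_i| - |y_i - g(\mathcal{D}_1)_i| \le |g(\mathcal{D}_1)_i - g(\mathcal{D}_2)_i|$; summing over $i$ gives $\sum_{i=1}^d(\cdots) \le \lVert g(\mathcal{D}_1) - g(\mathcal{D}_2)\rVert_1$. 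Since $\mathcal{D}_1$ and $\mathcal{D}_2$ are user-level neighbours, the definition of user-level sensitivity gives $\lVert g(\mathcal{D}_1) - g(\mathcal{D}_2)\rVert_1 \le \Delta_g$, so the displayed ratio is at most $e^\varepsilon$ for every $\mathbf{y}$.

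Finally I would integrate this pointwise bound over an arbitrary measurable set: for any measurable $Y\subseteq\mathbb{R}^d$,
\[
\Pr[M^{\text{Lap}}_g(\mathcal{D}_1)\in Y] = \int_Y p_{\mathcal{D}_1}(\mathbf{y})\,\d\mathbf{y} \le e^\varepsilon\int_Y p_{\mathcal{D}_2}(\mathbf{y})\,\d\mathbf{y} = e^\varepsilon\,\Pr[M^{\text{Lap}}_g(\mathcal{D}_2)\in Y],
\]
which is exactly the user-level $\varepsilon$-DP condition of the first definition. I do not expect any genuine obstacle: the only points needing care are the degenerate case $\Delta_g=0$, the observation that the product-of-Laplaces density is calibrated precisely so that the exponents aggregate into an $\ell_1$ distance (matching the $\ell_1$ definition of $\Delta_g$), and the routine measure-theoretic remark that a pointwise density-ratio bound integrates to the claimed probability inequality because both output laws are dominated by Lebesgue measure.
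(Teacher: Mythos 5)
Your proof is correct and is precisely the classical Laplace-mechanism argument (density ratio bounded via the reverse triangle inequality and the $\ell_1$ sensitivity), carried over verbatim to the user-level neighbouring relation. The paper does not prove this statement at all---it cites it as well known from Dwork et al.---and the proof it implicitly relies on is exactly the one you give, so there is nothing to reconcile; your handling of the degenerate case $\Delta_g=0$ and the integration step are both fine.
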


\subsection{The Worst-Case Error Metric}

All through, in this paper, we shall work with user-level $\varepsilon$-DP mechanisms that add a suitable amount of Laplace noise that is tailored to the sensitivity of the function used as an estimator of the sample mean $f$. 
Consider a mechanism $M$ for the user-level $\varepsilon$-DP release of the statistic $f$. The canonical structure of $M$ (see \cite[Footnote 1]{staircase}, \cite{opt}) is: $M(\mathcal{D}) = \overline{f}(\mathcal{D})+{\mathbf{Z}},$
for some estimator $\overline{f}$ of $f$, with user-level sensitivity $\Delta_{\overline{f}}$, with $\mathbf{Z} = (Z_1,\ldots,Z_d)$, with $Z_i\stackrel{\text{i.i.d.}}{\sim} \text{Lap}\left(\Delta_{\overline{f}}/\varepsilon\right)$, $i\in [d]$.  



For the mechanism $M$, we define its \emph{worst-case} estimation error as
\begin{equation}
	\label{eq:eg}
	E_M = E_{\overline{f}}^{(d)}:= \max_{\mathcal{D}\in \mathsf{D}} \big \lVert f(\mathcal{D}) - \overline{f}(\mathcal{D}) \big\rVert_1 + \E[\lVert{\mathbf{Z}}\rVert_1].
\end{equation}
Clearly, the expression above is an upper bound on the $\ell_1$-error $E^{(1)}_M:= \max_{\mathcal{D}\in \mathsf{D}} \E[|f(\mathcal{D})-M(\mathcal{D})|]$, via the triangle inequality. When the dimension $d$ is clear from the context, we simply denote $E_{\overline{f}}^{(d)}$ as $E_{\overline{f}}$. The distribution-independent expression in \eqref{eq:eg} conveniently captures the errors due to bias (the first term) and due to noise addition for privacy (the second term); a similar such error measure that separates the bias and noise errors was employed in \cite{amin}. 
\section{Worst-Case Errors of Bounding/Clipping Strategies}
\label{sec:bound}
\subsection{On Clipping Strategies}
We work with estimators $\overline{f} = \overline{f}_{\left\{a_j^{(\ell)},b_j^{(\ell)}\right\}}$ of $f$ obtained by bounding user contributions as follows: for each $\ell\in [L]$ and $j\in [m_\ell]$, we let $\overline{\mathbf{x}}_j^{(\ell)}:= {\mathsf{A}_{a_j^{(\ell)},b_j^{(\ell)}}}\left(\mathbf{x}_j^{(\ell)}\right)$, for reals $0\leq a_j^{(\ell)}\leq b_j^{(\ell)}\leq U$. In words, $\overline{\mathbf{x}}_j^{(\ell)}$ is a projection of $\mathbf{x}_j^{(\ell)}$ onto the set $\mathsf{A}_{a_j^{(\ell)},b_j^{(\ell)}}$ (see Figure \ref{fig:proj}), which, intuitively, reduces the range of values that $\mathbf{x}_j^{(\ell)}$ can take, and hence its sensitivity too. We then set 
\[
\overline{f} = \overline{f}(\mathcal{D}):= \frac{1}{\sum_{\ell'=1}^L m_{\ell'}}\cdot \sum_{\ell=1}^L \sum_{j=1}^{m_\ell} \mathbf{\overline{x}}_j^{(\ell)}.
\]
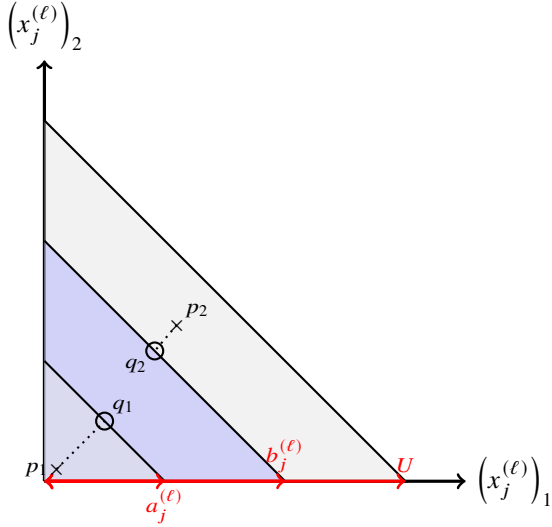
\begin{figure}
	\centering
	\begin{tikzpicture} [scale=1]
		
		\newcommand{\outerradius}{3}
		\newcommand{\middleradius}{2}
		\newcommand{\innerradius}{1}
		
		\draw[->, very thick] (0,0) -- (\outerradius+0.5,0) node[right] {$\left(x_j^{(\ell)}\right)_1$};
		\draw[->, very thick] (0,0) -- (0,\outerradius+0.5) node[above] {$\left(x_j^{(\ell)}\right)_2$};
		
		\draw[thick, fill=gray!20, fill opacity=0.5] (0,\outerradius) -- (\outerradius,0) -- (0,0);
		
		\fill[blue!30, fill opacity=0.5] (0,\middleradius) -- (\middleradius,0) -- (0,0);
		
		\draw[thick, fill=gray!20, fill opacity=0.5] (0,\innerradius) -- (\innerradius,0) -- (0,0); 
		
		\draw[thick,black] (0,\middleradius) -- (\middleradius,0) -- (0,0);
		
		\footnotesize
		\draw[<->, very thick, red, line cap=round] (0,0) -- (\innerradius,0) node [below,font=\footnotesize] {$a_j^{(\ell)}$};
		\draw[<->, very thick, red, line cap=round] (0,0) -- (\middleradius,0) node [above,font=\footnotesize] {$b_j^{(\ell)}$};
		
		\draw[<->,very thick, red, line cap=triangle] (0,0) -- (\outerradius,0) node [above,font=\footnotesize] {$U$};
		
		\draw[thick, black] (0.1,0.1) node {\textbf{$\times$}} node[left] {$p_1$};
		\draw[dotted, thick] (0.1,0.1) -- (0.5,0.5);
		
		\pgfmathsetmacro{\endx}{1.1/1.2}
		\pgfmathsetmacro{\endy}{1.3/1.2}
		
		\draw[thick, black] (\endx,\endy) circle (2pt) node[below left]{$q_2$};
		
		\draw[thick, black] (1.1,1.3) node {\textbf{$\times$}} node[above right]{$p_2$};
		
		\draw[dotted, thick] (1.1,1.3) -- (\endx,\endy);
		
		\draw[thick, black] (0.5, 0.5) circle (2pt) node[above right]{$q_1$};
		
	\end{tikzpicture}
	\caption{The annulus $\mathsf{A}_{a_j^{(\ell)},b_j^{(\ell)}}$, for $d=2$, shown in blue. Here, the points $q_1$, $q_2$ equal $\mathsf{A}_{a_j^{(\ell)},b_j^{(\ell)}}(p_1)$ and $\mathsf{A}_{a_j^{(\ell)},b_j^{(\ell)}}(p_2)$, respectively.}
	\label{fig:proj}
\end{figure}
Note that the class $\mathsf{B}$ of estimators $\overline{f}$ as above  captures those estimators obtained by dropping selected samples $\mathbf{x}_j^{(\ell)}$ (by setting $a_j^{(\ell)}= b_j^{(\ell)} = 0$ for those samples) and those obtained by projecting samples onto an $\ell_1$-bounded subset of $\Delta_U$, in addition to strategies that perform a combination of dropping and projection. This class of estimators hence includes several common estimators of the sample mean used in works such as \cite{userlevel,dp_preprint,tit-preprint}.


\subsection{On Worst-Case Errors of Clipping Strategies}
\label{sec:worst-case}
Consider the quantity $E_{\overline{f}}^{(d)}$, for $\overline{f}\in \mathsf{B}$. The following proposition then holds:
\begin{proposition}
	\label{lem:worst-case}
	We have that
	\begin{align*}
		E_{\overline{f}}^{(1)} \notag&= \frac{1}{\sum_{\ell'\leq L} m_{\ell'}}\cdot \left(\sum_{\ell\leq L}\sum_{j\leq m_\ell} \max\left\{a_j^{(\ell)},U-b_j^{(\ell)}\right\}\right)\ + \notag\\
		&\ \ \  \ \ \ \ \ \ \  \ \ \ \ \ \ \ \  \ \ \ \  \ \ \ \frac{d\cdot\max_{\ell\leq L} \sum_{j\leq m_\ell} \left(b_j^{(\ell)}-a_j^{(\ell)}\right)}{\varepsilon\cdot \sum_{\ell'\leq L} m_{\ell'}}
	\end{align*}
	and for any $d\geq 2$,
	\begin{align*}
		E_{\overline{f}}^{(d)} \notag&= \frac{1}{\sum_{\ell'\leq L} m_{\ell'}}\cdot \left(\sum_{\ell\leq L}\sum_{j\leq m_\ell} \max\left\{a_j^{(\ell)},U-b_j^{(\ell)}\right\}\right)\ + \notag\\
		&\ \ \  \ \ \ \ \ \ \  \ \ \ \ \ \ \ \  \ \ \ \  \ \ \ \ \ \ \ \ \ \ \frac{2d\cdot\max_{\ell\leq L} \sum_{j\leq m_\ell} b_j^{(\ell)}}{\varepsilon\cdot \sum_{\ell'\leq L} m_{\ell'}}.
	\end{align*}
\end{proposition}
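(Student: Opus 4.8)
The plan is to split $E_{\overline f}^{(d)}$ into its two defining contributions and treat them separately. Since $\mathbf Z=(Z_1,\dots,Z_d)$ has independent $\mathrm{Lap}(\Delta_{\overline f}/\varepsilon)$ coordinates, each with mean absolute value $\Delta_{\overline f}/\varepsilon$, the noise term equals $\E[\lVert\mathbf Z\rVert_1]=d\,\Delta_{\overline f}/\varepsilon$, so the computation reduces to (i) evaluating the user-level sensitivity $\Delta_{\overline f}$ and (ii) evaluating the worst-case bias $\max_{\mathcal D\in\mathsf D}\lVert f(\mathcal D)-\overline f(\mathcal D)\rVert_1$; substituting the former into $d\,\Delta_{\overline f}/\varepsilon$ and adding the latter then yields the two claimed identities (one for $d=1$, one for $d\ge2$).

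For the sensitivity, a pair of user-level neighbours differs only in the samples of one user $\ell_0$, so $\overline f(\mathcal D_1)-\overline f(\mathcal D_2)$ is $\big(\sum_{\ell'}m_{\ell'}\big)^{-1}$ times a sum over $j\le m_{\ell_0}$ of differences of two vectors both lying in the annulus $\mathsf A_{a_j^{(\ell_0)},b_j^{(\ell_0)}}$. The triangle inequality gives $\Delta_{\overline f}\le\big(\sum_{\ell'}m_{\ell'}\big)^{-1}\max_{\ell_0}\sum_{j\le m_{\ell_0}}\mathrm{diam}_{\ell_1}\!\big(\mathsf A_{a_j^{(\ell_0)},b_j^{(\ell_0)}}\big)$, and a short computation shows $\mathrm{diam}_{\ell_1}(\mathsf A_{a,b})=b-a$ when $d=1$ (the annulus is the interval $[a,b]$) and $=2b$ when $d\ge2$ (attained by $b\mathbf e_1$ and $b\mathbf e_2$, while no two points of $\Delta_b$ are farther apart in $\ell_1$-norm). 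A matching pair of neighbours is obtained by sending each of user $\ell_0$'s samples to $U$ resp.\ $0$ when $d=1$, and to $U\mathbf e_1$ resp.\ $U\mathbf e_2$ when $d\ge2$ --- for these inputs the $\ell_1$-projection onto the annulus is unique and equals the asserted extreme point --- so that maximizing over $\ell_0$ determines $\Delta_{\overline f}$, and hence the second summand of each displayed formula.

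For the bias, write $f-\overline f=\big(\sum_{\ell'}m_{\ell'}\big)^{-1}\sum_{\ell\le L}\sum_{j\le m_\ell}\big(\mathbf x_j^{(\ell)}-\mathsf A_{a_j^{(\ell)},b_j^{(\ell)}}(\mathbf x_j^{(\ell)})\big)$. For the upper bound, note that an $\ell_1$-projection onto $\mathsf A_{a,b}$ only moves a point's coordinate sum toward $[a,b]$, so that $\lVert\mathbf x-\mathsf A_{a,b}(\mathbf x)\rVert_1=\max\{0,\,a-\lVert\mathbf x\rVert_1,\,\lVert\mathbf x\rVert_1-b\}$ for every $\mathbf x\in\Delta_U$; maximizing over $\lVert\mathbf x\rVert_1\in[0,U]$ gives the per-sample bound $\max\{a,U-b\}$, and the triangle inequality together with summation and normalization give ``$\le$'' in both formulas. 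The reverse inequality is the substantive step: one must produce a single dataset on which every per-sample clipping displacement simultaneously attains $\max\{a_j^{(\ell)},U-b_j^{(\ell)}\}$ and, crucially, these displacements do not cancel inside the outer $\lVert\cdot\rVert_1$. When $d\ge2$ the geometry of $\Delta_U$ makes this possible: send every sample with $a_j^{(\ell)}\ge U-b_j^{(\ell)}$ to $\mathbf 0$, whose clip sits on $\delta_{a_j^{(\ell)}}$ and so contributes an $\ell_1$-displacement of size $a_j^{(\ell)}$, and send every remaining sample to $U\mathbf e_2$, whose clip is $b_j^{(\ell)}\mathbf e_2$ and which contributes the displacement $(U-b_j^{(\ell)})\mathbf e_2$; choosing the directions of the former displacements so that the two families occupy disjoint coordinates makes their $\ell_1$-norms add, so the aggregate displacement has $\ell_1$-norm exactly $\sum_{\ell,j}\max\{a_j^{(\ell)},U-b_j^{(\ell)}\}$. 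The $d=1$ case is handled by the same method, driving all displacements to a common extreme.

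The step I expect to be the main obstacle is precisely this bias lower bound. The upper bounds and the whole sensitivity analysis are routine ``triangle inequality plus extreme point'' arguments, but certifying that the worst-case bias is as large as the sum of the per-sample maxima requires a careful construction of the worst-case dataset, and for $d\ge2$ it hinges on being able to route the ``drop-type'' displacements (from samples sent to $\mathbf 0$) and the ``project-type'' displacements (from samples sent to the boundary of $\Delta_U$) onto distinct coordinate directions so that no $\ell_1$-cancellation occurs; a naive dataset typically loses some of the per-sample error to cancellation.
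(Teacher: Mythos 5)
Your decomposition into bias plus noise, the sensitivity computation via the $\ell_1$-diameter of the annulus ($b-a$ for $d=1$, $2b$ for $d\geq 2$, with the extremal pairs you name), and the $d\geq 2$ bias lower bound are all correct and essentially coincide with the paper's argument. In fact your $d\geq 2$ construction, which routes the ``drop-type'' displacements (coordinate-wise nonpositive, of $\ell_1$-mass $a_j^{(\ell)}$) and the ``project-type'' displacements (coordinate-wise nonnegative, of mass $U-b_j^{(\ell)}$) onto disjoint coordinates, is \emph{more} careful than the paper's own proof, which simply combines the per-sample maxima with the expression for $\lVert f-\overline{f}\rVert_1$ without addressing cancellation inside the outer $\lVert\cdot\rVert_1$.

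The gap sits exactly where you predicted trouble, but in the case you waved off: the $d=1$ bias lower bound. In one dimension there is no second coordinate to route displacements to, and the two families of extreme displacements have \emph{opposite signs}: a sample with $a_j^{(\ell)}\geq U-b_j^{(\ell)}$ attains its maximal displacement only at $x=0$, where $x-\Pi_{[a_j^{(\ell)},b_j^{(\ell)}]}(x)=-a_j^{(\ell)}<0$, while a sample with $U-b_j^{(\ell)}>a_j^{(\ell)}$ attains its maximum only at $x=U$, where the displacement is $+(U-b_j^{(\ell)})>0$. Since each displacement ranges exactly over $[-a_j^{(\ell)},\,U-b_j^{(\ell)}]$, the largest achievable value of $\bigl|\sum_{\ell,j}(x_j^{(\ell)}-\overline{x}_j^{(\ell)})\bigr|$ is $\max\bigl\{\sum_{\ell,j}a_j^{(\ell)},\ \sum_{\ell,j}(U-b_j^{(\ell)})\bigr\}$, which is strictly smaller than $\sum_{\ell,j}\max\{a_j^{(\ell)},U-b_j^{(\ell)}\}$ whenever both sample types are present (two samples with $(a,b)=(U,U)$ and $(0,0)$ give $U$ versus $2U$). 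So ``driving all displacements to a common extreme'' is impossible, and your proof of the $d=1$ equality does not close. To be fair, the paper's proof of Lemma \ref{lem:proj2} makes the same silent leap, so the $d=1$ bias formula should really be read as an upper bound, or as an equality restricted to parameter families for which $a_j^{(\ell)}-(U-b_j^{(\ell)})$ has a constant sign; the discrepancy is harmless downstream because Lemma \ref{lem:helper} reduces Theorem \ref{thm:opt} to estimators with $a_j^{(\ell)}+b_j^{(\ell)}=U$, where the two expressions coincide. If you want a complete argument, either prove the $d=1$ bias as $\max\{\sum_{\ell,j} a_j^{(\ell)},\sum_{\ell,j}(U-b_j^{(\ell)})\}$ or state the proposition under that symmetry restriction.
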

The proof of Proposition \ref{lem:worst-case} proceeds with help from a few lemmas. For $E_{\overline{f}}$ as in \eqref{eq:eg}, we define $\beta^{(d)}(\overline{f}) = \beta(\overline{f})$ to be the bias, i.e., $\beta(\overline{f}):= \max_{\mathcal{D}\in \mathsf{D}} \big \lVert f(\mathcal{D}) - \overline{f}(\mathcal{D}) \big\rVert$ and $\eta^{(d)}(\overline{f})=\eta(\overline{f})$ to be the error due to noise addition, i.e., $\eta(\overline{f}):= \E[\lVert\mathbf{\mathbf{Z}}\rVert_1]$. First, we aim to characterize $\beta(\overline{f})$. To this end, we first state a necessary condition for a vector $\mathbf{y}$ to be an $\ell_1$-projection of a vector $\mathbf{a}\in \Delta_U$ onto $\Delta_\alpha$, for $ \alpha\leq U$. 
\begin{lemma}
	\label{lem:proj1}
	Given $\mathbf{a}\in \Delta_U$ and $\alpha\leq U$, we have $\Delta_\alpha(\mathbf{a}) = \mathbf{a}$, if $\mathbf{a}\in \Delta_{\alpha}$. Else, any $\ell_1$-projection $\mathbf{y} = \Delta_\alpha(\mathbf{a})$ must satisfy $\lVert \mathbf{y}\rVert_1 = \alpha$, with $y_i\leq a_i$, for all $i\in [d]$.
\end{lemma}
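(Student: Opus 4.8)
The plan is to first dispose of the easy case and then reduce the projection problem to a one-line optimization by establishing a coordinatewise domination property. First, note that $\Delta_\alpha$ is a compact subset of $\mathbb{R}^d$ and $\mathbf{b}\mapsto \lVert \mathbf{a}-\mathbf{b}\rVert_1$ is continuous, so an $\ell_1$-projection $\mathbf{y}=\Delta_\alpha(\mathbf{a})$ exists. If $\mathbf{a}\in \Delta_\alpha$, then $\lVert \mathbf{a}-\mathbf{a}\rVert_1 = 0$, which is clearly the unique minimizer, giving $\Delta_\alpha(\mathbf{a})=\mathbf{a}$.

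Next I would treat the case $\mathbf{a}\notin \Delta_\alpha$. Since $\mathbf{a}\in \Delta_U$ we have $a_i\geq 0$ for all $i$, and $\mathbf{a}\notin \Delta_\alpha$ forces $\sum_{i}a_i > \alpha$. The key step is to show that any $\ell_1$-projection $\mathbf{y}$ must satisfy $y_i\leq a_i$ for every $i\in[d]$. I would argue this by a truncation/exchange argument: suppose for contradiction that $y_k > a_k$ for some coordinate $k$. Define $\mathbf{y}'$ by $y'_k := a_k$ and $y'_i := y_i$ for $i\neq k$. Then $\mathbf{y}'\geq \mathbf{0}$ coordinatewise and $\sum_i y'_i \leq \sum_i y_i \leq \alpha$, so $\mathbf{y}'\in \Delta_\alpha$; moreover $|a_k - y'_k| = 0 < |a_k - y_k|$ while all other coordinates are unchanged, so $\lVert \mathbf{a}-\mathbf{y}'\rVert_1 < \lVert \mathbf{a}-\mathbf{y}\rVert_1$, contradicting optimality of $\mathbf{y}$. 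Hence $y_i\leq a_i$ for all $i$.

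Given $0\leq y_i\leq a_i$ for all $i$, the objective simplifies to $\lVert \mathbf{a}-\mathbf{y}\rVert_1 = \sum_i (a_i - y_i) = \lVert \mathbf{a}\rVert_1 - \lVert \mathbf{y}\rVert_1$. Minimizing this over feasible $\mathbf{y}$ is therefore equivalent to maximizing $\lVert \mathbf{y}\rVert_1$ subject to $\mathbf{y}\in \Delta_\alpha$ and $y_i\leq a_i$; the constraint $\mathbf{y}\in\Delta_\alpha$ gives $\lVert \mathbf{y}\rVert_1 \leq \alpha$, and this bound is attained (e.g. by taking $\mathbf{y}=\tfrac{\alpha}{\lVert\mathbf{a}\rVert_1}\mathbf{a}$, which satisfies $y_i\leq a_i$ since $\alpha < \lVert\mathbf{a}\rVert_1$). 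Consequently every optimal $\mathbf{y}$ has $\lVert \mathbf{y}\rVert_1 = \alpha$, which is the claimed conclusion.

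I do not anticipate a real obstacle here; the only point requiring care is the logical ordering — one must first establish $y_i\leq a_i$ at optimality (so that the absolute values collapse) before rewriting the objective, since without coordinatewise domination the identity $\lVert\mathbf{a}-\mathbf{y}\rVert_1=\lVert\mathbf{a}\rVert_1-\lVert\mathbf{y}\rVert_1$ need not hold. Everything else is elementary convex-geometry bookkeeping.
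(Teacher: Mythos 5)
Your proof is correct, but it reverses the paper's decomposition and uses slightly different perturbations. The paper first proves $\lVert \mathbf{y}\rVert_1=\alpha$ by perturbing a supposed minimizer with $\lVert \mathbf{y}\rVert_1<\alpha$ along the segment toward $\mathbf{a}$ (a convex-combination argument), and only then proves $y_i\leq a_i$ via a two-coordinate mass-transfer exchange, which needs the already-established fact $\lVert \mathbf{y}\rVert_1=\alpha<\lVert\mathbf{a}\rVert_1$ to guarantee a coordinate with $y_j< a_j$ to transfer mass to. You instead prove $y_i\leq a_i$ first by a single-coordinate truncation ($y_k\gets a_k$), which is self-contained because truncation can only decrease the sum and hence preserves feasibility in $\Delta_\alpha$; you then get $\lVert\mathbf{y}\rVert_1=\alpha$ not by a perturbation but by computing the optimal value globally: on the dominated set the objective collapses to $\lVert\mathbf{a}\rVert_1-\lVert\mathbf{y}\rVert_1$, and exhibiting the feasible witness $\tfrac{\alpha}{\lVert\mathbf{a}\rVert_1}\mathbf{a}$ of norm $\alpha$ (the same witness the paper uses in its appendix for a different purpose) pins the minimum at $\lVert\mathbf{a}\rVert_1-\alpha$, forcing every minimizer to have norm $\alpha$. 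Your ordering is arguably cleaner, since each step is independent of the other and the truncation argument avoids the paper's (slightly delicate) need for a strict inequality $y_j<a_j$ in the exchange; the paper's route has the mild advantage of not requiring an explicit witness of norm $\alpha$. The existence remark via compactness is a nice touch the paper omits. No gaps.
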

\begin{proof}
	The first statement in the lemma is clear. Now, for $\mathbf{a}\notin \Delta_\alpha$, suppose that $\mathbf{y} = \Delta_\alpha(\mathbf{a})$ is such that $\lVert \mathbf{y}\rVert_1 < \alpha$. It can then be seen that by setting $\mathbf{y}':= \lambda\mathbf{y}+(1-\lambda)\mathbf{a}$, for some $\lambda\in (0,1)$ such that $\lVert \mathbf{y}'\rVert_1 = \alpha$, we will obtain that $\lVert \mathbf{y}'-\mathbf{a}\rVert_1 < \lVert \mathbf{y}-\mathbf{a}\rVert_1$, which is a contradiction. Likewise, suppose that $y_i>a_i$, for some $i\in [d]$. Now, consider any coordinate $j\in [d]$ such that $y_j\leq a_j$ (such a coordinate must exist, since $\lVert \mathbf{y}\rVert_1 = \alpha$); by letting $m:= \min\{|y_i-a_i|,|y_j-a_j|\}$ and setting $y_i \gets y_i-m$ and $y_j \gets y_j+m$, we see that we strictly decrease $\lVert \mathbf{y}-\mathbf{a}\rVert_1$, which is a contradiction.
\end{proof}
In the appendix, we explicitly characterize the vectors $\mathbf{y}\in \Delta_\alpha$ that are the $\ell_1$-projections of $\mathbf{a}\in \Delta_U\setminus \Delta_\alpha$, for $\alpha\leq U$; indeed, we have that the condition stated in Lemma \ref{lem:proj1} is both necessary and sufficient. The next lemma exactly characterizes the bias $\beta^{(d)}(\overline{f})$, for any $d\geq 1$.
\begin{lemma}
	\label{lem:proj2}
	We have that for any $d\geq 1$,
	\[\beta^{(d)}(\overline{f}) = \frac{1}{\sum_{\ell'\leq L} m_{\ell'}}\cdot \left(\sum_{\ell\leq L}\sum_{j\leq m_\ell} \max\left\{a_j^{(\ell)},U-b_j^{(\ell)}\right\}\right).\]
\end{lemma}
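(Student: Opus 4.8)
The plan is to compute the bias $\beta^{(d)}(\overline{f}) = \max_{\mathcal{D}\in\mathsf{D}}\lVert f(\mathcal{D})-\overline{f}(\mathcal{D})\rVert_1$ by first observing that, since both $f$ and $\overline{f}$ are normalized sums of per-sample vectors, we have
\[
f(\mathcal{D})-\overline{f}(\mathcal{D}) = \frac{1}{\sum_{\ell'\leq L} m_{\ell'}}\sum_{\ell\leq L}\sum_{j\leq m_\ell}\left(\mathbf{x}_j^{(\ell)}-\overline{\mathbf{x}}_j^{(\ell)}\right),
\]
where $\overline{\mathbf{x}}_j^{(\ell)} = \mathsf{A}_{a_j^{(\ell)},b_j^{(\ell)}}(\mathbf{x}_j^{(\ell)})$. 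The first reduction is to pass the $\ell_1$-norm through the sum via the triangle inequality to get the upper bound $\beta^{(d)}(\overline{f}) \leq \frac{1}{\sum m_{\ell'}}\sum_{\ell,j}\max_{\mathbf{x}\in\Delta_U}\lVert \mathbf{x}-\mathsf{A}_{a_j^{(\ell)},b_j^{(\ell)}}(\mathbf{x})\rVert_1$, and then to argue the bound is tight: because the samples $\mathbf{x}_j^{(\ell)}$ range independently over $\Delta_U$, a maximizing dataset can be obtained by choosing, for each $(\ell,j)$ simultaneously, a sample $\mathbf{x}_j^{(\ell)}$ achieving its own per-sample maximum, so long as all per-sample displacement vectors can be made to point in a consistent enough direction (or, more simply, since each term is maximized independently, the supremum of the sum equals the sum of suprema). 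So the whole problem reduces to the single-sample quantity $\max_{\mathbf{x}\in\Delta_U}\lVert \mathbf{x}-\mathsf{A}_{\alpha,\beta}(\mathbf{x})\rVert_1$ for $0\leq\alpha\leq\beta\leq U$, which I claim equals $\max\{\alpha,U-\beta\}$.

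To evaluate the single-sample maximum, I would split $\Delta_U$ into three regions according to where $\mathbf{x}$ lies relative to the annulus $\mathsf{A}_{\alpha,\beta} = (\Delta_\beta\setminus\Delta_\alpha)\cup\delta_\alpha$. If $\alpha\leq\lVert\mathbf{x}\rVert_1\leq\beta$, then $\mathbf{x}\in\mathsf{A}_{\alpha,\beta}$ and the displacement is $0$. If $\lVert\mathbf{x}\rVert_1>\beta$, the projection lands on $\delta_\beta$ and, by (the $\beta$-analogue of) Lemma~\ref{lem:proj1}, satisfies $y_i\leq x_i$ for all $i$ with $\lVert\mathbf{y}\rVert_1=\beta$, so $\lVert\mathbf{x}-\mathbf{y}\rVert_1 = \sum_i(x_i-y_i) = \lVert\mathbf{x}\rVert_1-\beta \leq U-\beta$, with equality when $\lVert\mathbf{x}\rVert_1 = U$. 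If $\lVert\mathbf{x}\rVert_1<\alpha$, the projection lands on $\delta_\alpha$; here the projection moves mass outward (one can take $\mathbf{y}=\mathbf{x}+(\alpha-\lVert\mathbf{x}\rVert_1)\mathbf{e}_i$ for any coordinate, but more carefully the $\ell_1$-optimal projection has $y_i\geq x_i$ for all $i$ and $\lVert\mathbf{y}\rVert_1=\alpha$, a claim provable by the same exchange argument as in Lemma~\ref{lem:proj1}), giving $\lVert\mathbf{x}-\mathbf{y}\rVert_1 = \alpha-\lVert\mathbf{x}\rVert_1\leq\alpha$, with equality at $\mathbf{x}=\mathbf{0}$. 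Taking the maximum over the three regions yields $\max\{\alpha,U-\beta\}$ exactly.

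The main obstacle I anticipate is the tightness (lower bound) direction — specifically justifying that one can simultaneously choose every $\mathbf{x}_j^{(\ell)}$ to attain its own per-sample displacement maximum, so that $\max_{\mathcal{D}}\lVert f-\overline{f}\rVert_1$ equals the sum of per-sample maxima rather than just being bounded above by it. This is where the triangle inequality could in principle be lossy: the per-sample displacements $\mathbf{x}_j^{(\ell)}-\overline{\mathbf{x}}_j^{(\ell)}$ need to be (nearly) positively aligned for the aggregate $\ell_1$-norm to equal the sum of individual $\ell_1$-norms. The resolution is that for each $(\ell,j)$ we have freedom in which coordinate direction the displacement points: when $a_j^{(\ell)}\geq U-b_j^{(\ell)}$ the worst sample is $\mathbf{x}_j^{(\ell)}=\mathbf{0}$ and the displacement $\overline{\mathbf{x}}_j^{(\ell)}-\mathbf{0}$ can be placed along a single common coordinate $\mathbf{e}_1$ (so all such displacements are $-a_j^{(\ell)}\mathbf{e}_1$, reading $f-\overline{f}$); when $U-b_j^{(\ell)}>a_j^{(\ell)}$ the worst sample is $\mathbf{x}_j^{(\ell)}=U\mathbf{e}_1$ and its displacement is $(U-b_j^{(\ell)})\mathbf{e}_1$. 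Aligning all displacements along $\mathbf{e}_1$ makes the aggregate $\ell_1$-norm equal to the absolute value of the aggregate first coordinate, which is exactly the sum of the per-sample values $\max\{a_j^{(\ell)},U-b_j^{(\ell)}\}$ divided by $\sum m_{\ell'}$ — matching the upper bound and completing the proof. The remaining details (checking that $U\mathbf{e}_1\in\Delta_U$, that $\mathbf{0}\in\Delta_U$, and that the stated projections are genuinely $\ell_1$-optimal) are routine given Lemma~\ref{lem:proj1} and the appendix characterization referenced in the text.
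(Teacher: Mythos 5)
Your decomposition and the per-sample computation $\max_{\mathbf{x}\in\Delta_U}\lVert\mathbf{x}-\mathsf{A}_{\alpha,\beta}(\mathbf{x})\rVert_1=\max\{\alpha,U-\beta\}$ match the paper's proof, and you are right that the only delicate point is tightness of the triangle inequality (a point the paper's own proof passes over silently with ``putting together \eqref{eq:maxclip} and \eqref{eq:cliphelp}''). However, your resolution of that point has a sign error. An outward projection (sample inside $\Delta_{a_j^{(\ell)}}\setminus\delta_{a_j^{(\ell)}}$) produces a displacement $\mathbf{x}_j^{(\ell)}-\overline{\mathbf{x}}_j^{(\ell)}$ that is coordinatewise nonpositive, while an inward projection (sample outside $\Delta_{b_j^{(\ell)}}$) produces a coordinatewise nonnegative displacement. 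In your construction the first family contributes $-a_j^{(\ell)}\mathbf{e}_1$ and the second contributes $+(U-b_j^{(\ell)})\mathbf{e}_1$; routed onto the \emph{same} coordinate these cancel, so the aggregate first coordinate is $\sum_{\text{in}}(U-b_j^{(\ell)})-\sum_{\text{out}}a_j^{(\ell)}$, whose absolute value is in general strictly smaller than $\sum_{\ell,j}\max\{a_j^{(\ell)},U-b_j^{(\ell)}\}$. For $d\geq 2$ the argument is easily repaired: place the outward-displaced samples' projections along $\mathbf{e}_1$ and take the inward-displaced worst-case samples to be $U\mathbf{e}_2$ (projected to $b_j^{(\ell)}\mathbf{e}_2$); the two contributions then occupy orthogonal coordinates and their $\ell_1$-norms add, giving the claimed lower bound.

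For $d=1$ no such repair exists: there is only one coordinate, each displacement $x_j^{(\ell)}-\Pi_{[a_j^{(\ell)},b_j^{(\ell)}]}(x_j^{(\ell)})$ ranges exactly over $[-a_j^{(\ell)},\,U-b_j^{(\ell)}]$ as the sample varies, and the maximum of the absolute value of the sum of independently varying terms is $\max\bigl\{\sum_{\ell,j}a_j^{(\ell)},\ \sum_{\ell,j}(U-b_j^{(\ell)})\bigr\}$, not $\sum_{\ell,j}\max\{a_j^{(\ell)},U-b_j^{(\ell)}\}$. (Take two samples with $a_1=1,b_1=U$ and $a_2=0,b_2=U-1$: the claimed formula gives $2/N$ while the true worst case is $1/N$.) The two expressions coincide precisely when every per-sample maximum is attained on the same side, e.g.\ under the normalization $a_j^{(\ell)}+b_j^{(\ell)}=U$ that is later imposed via Lemma~\ref{lem:helper}, so the downstream optimization is not affected; but for general clipping parameters the right-hand side of the lemma is only an upper bound on $\beta^{(1)}(\overline{f})$, and your proof (like the paper's) does not establish equality in that case. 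You should either prove the $d=1$ case only in the aligned regime or replace the sum of maxima by the maximum of the two sums.
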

\begin{proof}
	For a sample $\mathbf{x}_j^{(\ell)}\in \mathsf{A}_{a_j^{(\ell)},b_j^{(\ell)}}$, it is clear that $\mathsf{A}_{a_j^{(\ell)},b_j^{(\ell)}}(\mathbf{x}_j^{(\ell)}) = \mathbf{x}_j^{(\ell)}$. Now, suppose that $\mathbf{x}_j^{(\ell)}\in \Delta_U\setminus \Delta_{b_j^{(\ell)}}$. Following Lemma \ref{lem:proj1}, if $\mathbf{y} = \mathsf{A}_{a_j^{(\ell)},b_j^{(\ell)}}(\mathbf{a})$, we must have $$\lVert \mathbf{y} - \mathbf{x}_j^{(\ell)}\rVert_1 = \sum_{i\leq d} ((x_j^{(\ell)})_i - y_i) = \lVert \mathbf{x}_j^{(\ell)}\rVert_1 - b_j^{(\ell)}.$$ Hence, the worst-case clipping error for sample $\mathbf{x}_j^{(\ell)}$ is $$\max\ \lVert \mathsf{A}_{a_j^{(\ell)},b_j^{(\ell)}}(\mathbf{x}_j^{(\ell)}) - \mathbf{x}_j^{(\ell)}\rVert_1 = U-b_j^{(\ell)}.$$ By symmetric arguments, one can show that if $\mathbf{x}_j^{(\ell)}\in \Delta_{a_j^{(\ell)}}\setminus \delta_{a_j^{(\ell)}}$, we must have $$\max\ \lVert \mathsf{A}_{a_j^{(\ell)},b_j^{(\ell)}}(\mathbf{x}_j^{(\ell)}) - \mathbf{x}_j^{(\ell)}\rVert_1 = a_j^{(\ell)}.$$ Thus, overall, we obtain that 
	\begin{equation}\max_{\mathbf{x}_j^{(\ell)}\in \Delta_U} \lVert \mathsf{A}_{a_j^{(\ell)},b_j^{(\ell)}}(\mathbf{x}_j^{(\ell)}) - \mathbf{x}_j^{(\ell)}\rVert_1 = \max\left\{a_j^{(\ell)},U-b_j^{(\ell)}\right\}. \label{eq:maxclip}\end{equation} 
	Now, for any dataset $\mathcal{D}$, recall that
	\begin{align}
		&\lVert f(\mathcal{D})-\overline{f}(\mathcal{D})\rVert_1 \notag\\
		&=\frac{1}{\sum_{\ell'\leq L} m_{\ell'}}\left\lVert \sum_{\ell\leq L}\sum_{j\leq m_\ell} \left(\mathbf{x}_j^{(\ell)} - \mathsf{A}_{a_j^{(\ell)},b_j^{(\ell)}}(\mathbf{x}_j^{(\ell)}) \right)\right \rVert_1. \label{eq:cliphelp}
	\end{align}
	Putting together \eqref{eq:maxclip} and \eqref{eq:cliphelp} concludes the proof.
\end{proof}
The calculation of $\eta(\overline{f})$ is quite similar to the proof above, and is captured in Lemma \ref{lem:proj3} below.

\begin{lemma}
	\label{lem:proj3}
	We have that $$\eta^{(1)}(\overline{f}) = \frac{d\cdot \max_{\ell\leq L} \sum_{j\leq m_\ell} \left(b_j^{(\ell)}-a_j^{(\ell)}\right)}{\varepsilon\cdot \sum_{\ell'\leq L} m_{\ell'}}$$
	and for any $d\geq 2$,
	$$\eta^{(d)}(\overline{f}) = \frac{2d\cdot \max_{\ell\leq L} \sum_{j\leq m_\ell} b_j^{(\ell)}}{\varepsilon\cdot \sum_{\ell'\leq L} m_{\ell'}}.$$
\end{lemma}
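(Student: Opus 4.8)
The plan is to reduce $\eta^{(d)}(\overline{f})$ to the user-level sensitivity $\Delta_{\overline{f}}$ of the clipped estimator, and then to compute that sensitivity. Since $\mathbf{Z}=(Z_1,\dots,Z_d)$ has i.i.d.\ coordinates $Z_i\sim\text{Lap}(\Delta_{\overline{f}}/\varepsilon)$ and the mean absolute deviation of a $\text{Lap}(b)$ random variable is $b$, linearity of expectation gives $\eta^{(d)}(\overline{f})=\E[\lVert\mathbf{Z}\rVert_1]=\sum_{i\leq d}\E[|Z_i|]=d\,\Delta_{\overline{f}}/\varepsilon$. So it suffices to show that $\Delta_{\overline{f}}=(\sum_{\ell'\leq L}m_{\ell'})^{-1}\max_{\ell\leq L}\sum_{j\leq m_\ell}(b_j^{(\ell)}-a_j^{(\ell)})$ when $d=1$, and $\Delta_{\overline{f}}=2(\sum_{\ell'\leq L}m_{\ell'})^{-1}\max_{\ell\leq L}\sum_{j\leq m_\ell}b_j^{(\ell)}$ when $d\geq 2$; the stated formulas then follow (with $d=1$ substituted in the first case).

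To compute $\Delta_{\overline{f}}$, I would use that two user-level neighbours $\mathcal{D}_1,\mathcal{D}_2$ differ only in the samples $\{\mathbf{x}_j\}$ versus $\{\overline{\mathbf{x}}_j\}$ of a single user $\ell_0$, and that $\overline{f}$ is a normalized sum of clipped samples, so
\[
\big\lVert\overline{f}(\mathcal{D}_1)-\overline{f}(\mathcal{D}_2)\big\rVert_1=\frac{1}{\sum_{\ell'\leq L}m_{\ell'}}\bigg\lVert\sum_{j\leq m_{\ell_0}}\big(\mathbf{y}_j-\mathbf{y}'_j\big)\bigg\rVert_1,
\]
where $\mathbf{y}_j:=\mathsf{A}_{a_j^{(\ell_0)},b_j^{(\ell_0)}}(\mathbf{x}_j)$ and $\mathbf{y}'_j:=\mathsf{A}_{a_j^{(\ell_0)},b_j^{(\ell_0)}}(\overline{\mathbf{x}}_j)$. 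Maximizing over $\ell_0$ and over the samples in $\Delta_U$ then amounts to evaluating $\max\big\lVert\sum_j(\mathbf{y}_j-\mathbf{y}'_j)\big\rVert_1$ where each of $\mathbf{y}_j,\mathbf{y}'_j$ ranges over the image of the map $\mathsf{A}_{a_j^{(\ell_0)},b_j^{(\ell_0)}}(\cdot)$ on $\Delta_U$. The crucial fact is that this image is exactly $\mathsf{A}_{a_j^{(\ell_0)},b_j^{(\ell_0)}}$ itself: the projection always lands in that set by definition, and since $\mathsf{A}_{a,b}\subseteq\Delta_b\subseteq\Delta_U$, every point of $\mathsf{A}_{a,b}$ is the (unique) $\ell_1$-projection of itself.

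It then remains to maximize $\big\lVert\sum_j(\mathbf{y}_j-\mathbf{y}'_j)\big\rVert_1$ over $\mathbf{y}_j,\mathbf{y}'_j\in\mathsf{A}_{a_j^{(\ell_0)},b_j^{(\ell_0)}}$. For $d=1$, $\mathsf{A}_{a,b}=[a,b]$, so the triangle inequality bounds $|y_j-y'_j|$ by $b_j^{(\ell_0)}-a_j^{(\ell_0)}$, and the resulting bound $\sum_j(b_j^{(\ell_0)}-a_j^{(\ell_0)})$ is attained at $y_j=b_j^{(\ell_0)}$, $y'_j=a_j^{(\ell_0)}$. For $d\geq 2$, every $\mathbf{y}\in\mathsf{A}_{a,b}$ has nonnegative entries with $\lVert\mathbf{y}\rVert_1\leq b$, whence $\big\lVert\sum_j\mathbf{y}_j\big\rVert_1=\sum_j\lVert\mathbf{y}_j\rVert_1\leq\sum_j b_j^{(\ell_0)}$ and likewise for the $\mathbf{y}'_j$, so the triangle inequality gives the bound $2\sum_j b_j^{(\ell_0)}$; this is tight by taking all $\mathbf{y}_j=b_j^{(\ell_0)}\mathbf{e}_1$ and all $\mathbf{y}'_j=b_j^{(\ell_0)}\mathbf{e}_2$ (legitimate since $b_j^{(\ell_0)}\in[a_j^{(\ell_0)},b_j^{(\ell_0)}]$ and $b_j^{(\ell_0)}\leq U$), for which the summed difference has $\ell_1$-norm $2\sum_j b_j^{(\ell_0)}$. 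Taking the maximum over $\ell_0$ and dividing by $\sum_{\ell'\leq L}m_{\ell'}$ finishes the sensitivity computation.

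The main obstacle I expect is the $d\geq 2$ case: recognizing that, unlike in the scalar setting, the worst case is achieved not by pushing each sample alternately to the inner and outer boundary of the annulus (which would give only $\sum_j(b_j^{(\ell_0)}-a_j^{(\ell_0)})$), but by placing the clipped mass of all of a user's samples on disjoint coordinates in the two neighbouring datasets, which decouples the two $\ell_1$ contributions and yields $2\sum_j b_j^{(\ell_0)}$ instead — together with the small check that the image of the annulus-projection really is all of $\mathsf{A}_{a,b}$, so that such extremal configurations are realizable. The $d=1$ analysis and the passage from $\Delta_{\overline{f}}$ to $\E[\lVert\mathbf{Z}\rVert_1]$ are routine.
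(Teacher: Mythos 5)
Your proposal is correct and follows essentially the same route as the paper: reduce $\eta^{(d)}(\overline{f})$ to $d\Delta_{\overline{f}}/\varepsilon$ via the mean absolute deviation of the Laplace distribution, then compute the user-level sensitivity by exhibiting the same extremal configurations (endpoints $a_j^{(\ell)},b_j^{(\ell)}$ for $d=1$; mass $b_j^{(\ell)}$ placed on disjoint coordinates $\mathbf{e}_1$ versus $\mathbf{e}_2$ for $d\geq 2$). The only difference is that you supply the matching upper bounds (triangle inequality plus nonnegativity, and the check that the image of the annulus projection is the full annulus) which the paper asserts implicitly; this is a welcome tightening rather than a different argument.
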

\begin{proof}
Consider first the case when $d = 1$. Via arguments entirely analogous to that in the proof of Lemma \ref{lem:proj2}, the user-level sensitivity of $\overline{f}$ is
$$
	\Delta_{\overline{f}} = \frac{\max_{\ell\leq L} \sum_{j\leq m_\ell} \left(b_j^{(\ell)}-a_j^{(\ell)}\right)}{\sum_{\ell'\leq L} m_{\ell'}},
$$
since in the worst-case, all samples of a user $\ell$ are changed each from $(a_j^{(\ell)},0,\ldots,0)$ to $(b_j^{(\ell)},0,\ldots,0)$. Next, for any $d\geq 2$, we have via similar arguments that
$$
\Delta_{\overline{f}} = \frac{2\max_{\ell\leq L} \sum_{j\leq m_\ell} b_j^{(\ell)}}{\sum_{\ell'\leq L} m_{\ell'}},
$$
since in the worst-case, all samples of a user $\ell$ are changed each from $(b_j^{(\ell)},0,0,\ldots,0)$ to $(0,b_j^{(\ell)},0,\ldots,0)$.
Using $\E[\lVert \mathbf{Z}\rVert_1] = d\E[|Z_1|] = d\Delta_{\overline{f}} /\varepsilon$ gives us the required result.
\end{proof}
The proof of Proposition \ref{lem:worst-case} then follows directly by putting together Lemmas \ref{lem:proj2} and \ref{lem:proj3}.

Given the characterization of the worst-case error $E_{\overline{f}}$ as above, we proceed with identifying an estimator $f^\star \in \mathsf{B}$ (equivalently, a bounding strategy) that minimizes $E_{\overline{f}}$, over all $\overline{f}\in \mathsf{B}$. Let $T_\varepsilon$ denote the $\left \lceil \left(\frac{2d}{\varepsilon}\right)\right \rceil^{\text{th}}$-largest value in the collection $\{Um_1,Um_2,\ldots,Um_L\}$; if $\varepsilon<2d/L$, we set $T_\varepsilon = 0$. Our main result is encapsulated in the following theorem:
\begin{theorem}
	\label{thm:opt}
	We have that $f^\star = \overline{f}_{\left\{a_j^{(\ell)},b_j^{(\ell)}\right\}}$ minimizes $E_{\overline{f}}$, where
	$
	a_j^{(\ell)} = \max\left\{\frac{Um_\ell-T_\varepsilon}{2m_\ell},0\right\}\text{ and } b_j^{(\ell)} = \min\left\{\frac{Um_\ell+T_\varepsilon}{2m_\ell},U\right\},
	$
	if $d=1$, and $
	a_j^{(\ell)} = 0\ \text{ and } b_j^{(\ell)} =\min\left\{\frac{T_\varepsilon}{m_\ell},U\right\}
	$, for any $d\geq 2$, for all $\ell\in [L]$ and $j\in [m_\ell]$.
\end{theorem}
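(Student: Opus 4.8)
The plan is to minimize the closed-form expression for $E_{\overline{f}}$ supplied by Proposition~\ref{lem:worst-case} directly over the feasible parameters $0\le a_j^{(\ell)}\le b_j^{(\ell)}\le U$. Since the prefactor $1/\sum_{\ell'\le L}m_{\ell'}$ is a fixed positive constant, it suffices to minimize the numerator. I would reduce each of the two regimes $d=1$ and $d\ge 2$ to a common one-dimensional optimization, phrasing the reduction as a lower bound on the numerator together with a matching construction — which turns out to be exactly the claimed $f^\star$.

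For $d\ge 2$ the noise numerator $2d\max_{\ell\le L}\sum_{j\le m_\ell}b_j^{(\ell)}$ does not involve the $a_j^{(\ell)}$, while each bias term $\max\{a_j^{(\ell)},U-b_j^{(\ell)}\}$ is nondecreasing in $a_j^{(\ell)}$ and is at least $U-b_j^{(\ell)}\ge 0$; hence the numerator is at least $\sum_{\ell}\sum_{j}(U-b_j^{(\ell)})+\tfrac{2d}{\varepsilon}\max_\ell\sum_j b_j^{(\ell)}=U\sum_{\ell'}m_{\ell'}-\sum_\ell S_\ell+\tfrac{2d}{\varepsilon}\max_\ell S_\ell$, where $S_\ell:=\sum_j b_j^{(\ell)}\in[0,Um_\ell]$, with equality when $a_j^{(\ell)}=0$. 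For $d=1$ the noise numerator depends only on the window lengths $w_j^{(\ell)}:=b_j^{(\ell)}-a_j^{(\ell)}$, while $\max\{a_j^{(\ell)},U-b_j^{(\ell)}\}\ge\tfrac12\bigl(a_j^{(\ell)}+(U-b_j^{(\ell)})\bigr)=\tfrac12\bigl(U-w_j^{(\ell)}\bigr)$, with equality precisely when the window $[a_j^{(\ell)},b_j^{(\ell)}]$ is centred at $U/2$; hence the numerator is at least $\tfrac12\bigl(U\sum_{\ell'}m_{\ell'}-\sum_\ell W_\ell\bigr)+\tfrac1\varepsilon\max_\ell W_\ell$ with $W_\ell:=\sum_j w_j^{(\ell)}\in[0,Um_\ell]$. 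After dropping the additive constant $U\sum_{\ell'}m_{\ell'}$ (and, for $d=1$, the harmless positive factor $\tfrac12$), both regimes reduce to minimizing
\[
h(S_1,\ldots,S_L):=c\max_{\ell\le L}S_\ell-\sum_{\ell\le L}S_\ell,\qquad c:=\frac{2d}{\varepsilon},
\]
over $S_\ell\in[0,Um_\ell]$, $\ell\in[L]$; every such tuple is realized by the uniform split $b_j^{(\ell)}=S_\ell/m_\ell$ for $d\ge 2$ (resp.\ $w_j^{(\ell)}=W_\ell/m_\ell$ for $d=1$), so the lower bounds above are attained.

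To solve the one-dimensional problem I would fix $t=\max_\ell S_\ell$; then $h$ is minimized by taking each coordinate $S_\ell=\min\{t,Um_\ell\}$, and it remains to minimize $\phi(t):=ct-\sum_{\ell\le L}\min\{t,Um_\ell\}$ over $t\ge 0$. The function $\phi$ is convex and piecewise linear with breakpoints among $\{Um_1,\ldots,Um_L\}$, and its subdifferential is $\partial\phi(t)=\bigl[c-\#\{\ell:Um_\ell\ge t\},\,c-\#\{\ell:Um_\ell>t\}\bigr]$. If $\varepsilon<2d/L$, i.e.\ $c>L$, then $\partial\phi(t)\subseteq(0,\infty)$ for all $t\ge 0$, so the minimum is at $t^\star=0=T_\varepsilon$. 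Otherwise, with $K:=\lceil 2d/\varepsilon\rceil\le L$ and $T_\varepsilon$ the $K$-th largest element of $\{Um_1,\ldots,Um_L\}$, the inequalities $\#\{\ell:Um_\ell>T_\varepsilon\}\le K-1\le c\le K\le\#\{\ell:Um_\ell\ge T_\varepsilon\}$ show that $0\in\partial\phi(T_\varepsilon)$, so by convexity $t^\star=T_\varepsilon$ is a global minimizer. In either case the minimizing tuple is $S_\ell^\star=\min\{T_\varepsilon,Um_\ell\}$.

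Finally I would read off the parameters: for $d\ge 2$, take $a_j^{(\ell)}=0$ and $b_j^{(\ell)}=S_\ell^\star/m_\ell=\min\{T_\varepsilon/m_\ell,U\}$; for $d=1$, take window length $w_j^{(\ell)}=S_\ell^\star/m_\ell=\min\{T_\varepsilon/m_\ell,U\}$ centred at $U/2$, which gives $a_j^{(\ell)}=\tfrac12(U-w_j^{(\ell)})=\max\{(Um_\ell-T_\varepsilon)/(2m_\ell),0\}$ and $b_j^{(\ell)}=\tfrac12(U+w_j^{(\ell)})=\min\{(Um_\ell+T_\varepsilon)/(2m_\ell),U\}$. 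These are feasible ($0\le a_j^{(\ell)}\le b_j^{(\ell)}\le U$) and coincide with the $a_j^{(\ell)},b_j^{(\ell)}$ in the statement, so the associated $\overline{f}\in\mathsf{B}$ is the asserted minimizer $f^\star$. I expect the one-dimensional step to be the only delicate part: pinning $T_\varepsilon$ down as exactly the $\lceil 2d/\varepsilon\rceil$-th order statistic of $\{Um_\ell\}$ needs care with the ceiling when $2d/\varepsilon\in\mathbb{N}$, with ties among the $Um_\ell$, and with the degenerate regime $\varepsilon<2d/L$; organizing it around the subdifferential of the convex $\phi$, instead of a piecewise slope computation, is what keeps this clean.
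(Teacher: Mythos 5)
Your proposal is correct, and it reaches the paper's answer by a noticeably different route in its two key steps. The paper first proves a separate reduction lemma (Lemma \ref{lem:helper}) via a perturbation argument to force $a_j^{(\ell)}+b_j^{(\ell)}=U$ (for $d=1$) or $a_j^{(\ell)}=0$ (for $d\geq 2$), and then verifies that a \emph{guessed} assignment $\tau^\star=T_\varepsilon$, $S^{(\ell),\star}=\max\{(Um_\ell-T_\varepsilon)/2,0\}$ satisfies the KKT conditions of the resulting linear program. You instead obtain the same reduction from the elementary pointwise inequalities $\max\{a,U-b\}\geq \tfrac12\bigl(a+(U-b)\bigr)$ and $\max\{a,U-b\}\geq U-b$ (with the equality conditions identifying the centred/zero-$a$ windows), and you then \emph{derive} the optimizer by collapsing both regimes to the single convex piecewise-linear function $\phi(t)=ct-\sum_\ell\min\{t,Um_\ell\}$ and locating $0$ in its subdifferential. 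What your route buys: it produces the minimizer rather than certifying a candidate, it makes the role of $T_\varepsilon$ as the $\lceil 2d/\varepsilon\rceil$-th order statistic transparent (including ties, integer $2d/\varepsilon$, and the degenerate regime $\varepsilon<2d/L$, which the paper's KKT verification glosses over), and it handles $d\geq 2$ explicitly rather than declaring it ``similar.'' What the paper's route buys is mainly a standard LP/KKT template that generalizes mechanically if the constraint set were enlarged. One small point worth a sentence in a write-up: when you ``fix $t=\max_\ell S_\ell$'' and set $S_\ell=\min\{t,Um_\ell\}$, the resulting tuple may have maximum strictly below $t$; this is harmless because $c>0$ makes its $h$-value at most $\phi(t)$, so $\min_t\phi$ is still a valid lower bound that your construction attains, but the justification should be stated.
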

Some remarks are in order. First, note that the optimal bounding strategy $f^\star$ clips sample values based only on the number of contributions of each user $\ell\in [L]$. Furthermore, note that the interval of projection is determined by $T_\varepsilon$, which is quite similar in structure to the optimal clipping threshold $T$ in \cite{amin} for item-level DP, which is the (privately estimated) $\left \lceil \left(\frac{2}{\varepsilon}\right)\right \rceil^{\text{th}}$-largest \emph{sample value}.

The proof of Theorem \ref{thm:opt} proceeds with help from the following lemma.
\begin{lemma}
	\label{lem:helper}
	For $d = 1$, there exists an estimator $\overline{f}_{\left\{a_j^{(\ell)},b_j^{(\ell)}\right\}}$ minimizing $E_{\overline{f}}$, which obeys $a_j^{(\ell)}+b_j^{(\ell)} = U$, for all $\ell\in [L]$ and $j\in [m_\ell]$. Furthermore, for $d\geq 2$, an estimator $\overline{f}_{\left\{a_j^{(\ell)},b_j^{(\ell)}\right\}}$ minimizing $E_{\overline{f}}$ must set $a_j^{(\ell)} = 0$, for all $\ell\in [L]$ and $j\in [m_\ell]$.
\end{lemma}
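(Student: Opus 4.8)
The plan is to prove both halves by starting from an arbitrary minimizer of $E_{\overline{f}}$ and perturbing its thresholds, reading off the effect on the two terms of $E_{\overline{f}}$ from Proposition~\ref{lem:worst-case}. A minimizer exists since $E_{\overline{f}}$ is continuous on the compact parameter set $\{0\leq a_j^{(\ell)}\leq b_j^{(\ell)}\leq U\}$. Write $N:=\sum_{\ell'\leq L}m_{\ell'}$, let $B(\overline{f}):=\sum_{\ell\leq L}\sum_{j\leq m_\ell}\max\{a_j^{(\ell)},U-b_j^{(\ell)}\}$ denote $N$ times the bias term, and recall that $N$ times the noise term equals $\tfrac{d}{\varepsilon}\max_{\ell}\sum_{j}(b_j^{(\ell)}-a_j^{(\ell)})$ when $d=1$ and $\tfrac{2d}{\varepsilon}\max_{\ell}\sum_{j}b_j^{(\ell)}$ when $d\geq 2$. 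The case $d\geq 2$ is then immediate: the noise term does not involve the lower thresholds $a_j^{(\ell)}$ at all, while every summand of $B(\overline{f})$ is non-decreasing in its $a_j^{(\ell)}$, so replacing all $a_j^{(\ell)}$ by $0$ with the $b_j^{(\ell)}$ held fixed leaves the noise term unchanged and cannot increase $B(\overline{f})$; hence it cannot increase $E_{\overline{f}}^{(d)}$, and it strictly decreases it whenever some $a_j^{(\ell)}>U-b_j^{(\ell)}$. This gives the $d\geq 2$ assertion.

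For $d=1$, observe first that $a_j^{(\ell)}+b_j^{(\ell)}=U$ holds exactly when the clipping interval $[a_j^{(\ell)},b_j^{(\ell)}]$ is symmetric about $U/2$, so the goal is to symmetrize an arbitrary minimizer sample by sample. For each $(\ell,j)$ put $c_j^{(\ell)}:=\max\{a_j^{(\ell)},U-b_j^{(\ell)}\}$ and $\widetilde a_j^{(\ell)}:=\min\{c_j^{(\ell)},U/2\}$, and replace $[a_j^{(\ell)},b_j^{(\ell)}]$ by $[\widetilde a_j^{(\ell)},\,U-\widetilde a_j^{(\ell)}]$. I would then check three things. (i) Feasibility: $0\leq\widetilde a_j^{(\ell)}\leq U/2\leq U-\widetilde a_j^{(\ell)}\leq U$, so the new estimator lies in $\mathsf{B}$. (ii) Bias: the new summand is $\max\{\widetilde a_j^{(\ell)},\widetilde a_j^{(\ell)}\}=\widetilde a_j^{(\ell)}=\min\{c_j^{(\ell)},U/2\}\leq c_j^{(\ell)}$, so $B$ does not increase. (iii) Noise: the new clipping range of sample $(\ell,j)$ is $(U-\widetilde a_j^{(\ell)})-\widetilde a_j^{(\ell)}=U-2\widetilde a_j^{(\ell)}=\max\{U-2c_j^{(\ell)},0\}$, whereas the old range satisfies $b_j^{(\ell)}-a_j^{(\ell)}=U-a_j^{(\ell)}-(U-b_j^{(\ell)})\geq U-2c_j^{(\ell)}$ and also $b_j^{(\ell)}-a_j^{(\ell)}\geq 0$, so the new range is at most the old one for every sample; summing over $j$ and maximizing over $\ell$ shows the noise term does not increase. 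Hence the symmetrized estimator is in $\mathsf{B}$, has $E_{\overline{f}}^{(1)}$ no larger than that of the original minimizer, and satisfies $a_j^{(\ell)}+b_j^{(\ell)}=U$ for all $\ell,j$, which is the $d=1$ assertion.

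The only step needing real care is (iii): one must establish the per-sample inequality $b_j^{(\ell)}-a_j^{(\ell)}\geq\max\{U-2c_j^{(\ell)},0\}$ (so that it is preserved by both the sum over $j$ and the outer $\max$ over $\ell$ in the noise term), and one must correctly treat the regime $c_j^{(\ell)}>U/2$, where the symmetric interval degenerates to the single point $\{U/2\}$ rather than being centred at $c_j^{(\ell)}$ — the clamping of $\widetilde a_j^{(\ell)}$ at $U/2$ is exactly what keeps the construction feasible there. Everything else is routine manipulation of the formulas in Proposition~\ref{lem:worst-case}.
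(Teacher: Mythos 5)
Your proof is correct and runs on the same engine as the paper's: read off both terms of $E_{\overline{f}}$ from Proposition~\ref{lem:worst-case} and perturb the thresholds monotonically. The difference is in how the perturbation is organized. The paper fixes one violating sample at a time: if $a_j^{(\tilde\ell)}+b_j^{(\tilde\ell)}>U$ it sets $b_j^{(\tilde\ell)}\gets b_j^{(\tilde\ell)}-\delta$ (and handles $a_j^{(\tilde\ell)}+b_j^{(\tilde\ell)}<U$ symmetrically), noting the bias summand is unchanged while the range $b_j^{(\tilde\ell)}-a_j^{(\tilde\ell)}$ shrinks. You instead do a one-shot global symmetrization, replacing every interval by $[\min\{c_j^{(\ell)},U/2\},\,U-\min\{c_j^{(\ell)},U/2\}]$ with $c_j^{(\ell)}=\max\{a_j^{(\ell)},U-b_j^{(\ell)}\}$, and verify per sample that neither the bias summand nor the range increases. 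This buys two things the paper leaves implicit: an explicit existence argument for a minimizer (compactness plus continuity), and correct treatment of the corner $a_j^{(\ell)}>U/2$, where the paper's reassignment would force $b_j^{(\ell)}<a_j^{(\ell)}$ and exit the feasible set, whereas your clamp at $U/2$ stays feasible and in fact strictly lowers the bias there. One small caveat: for $d\geq 2$ your argument (like the paper's own) gives a strict decrease only when some $a_j^{(\ell)}>U-b_j^{(\ell)}$, so what is actually established is that \emph{some} minimizer has all $a_j^{(\ell)}=0$ rather than the literal ``must'' of the statement; this existence form is exactly what is used downstream in \eqref{eq:optsimpd}, so nothing later is affected.
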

\begin{proof}
	Consider any optimal estimator $\overline{f}_{\left\{a_j^{(\ell)},b_j^{(\ell)}\right\}}$, and suppose that $a_j^{(\tilde{\ell})}+b_j^{(\tilde{\ell})}>U$, for some $\tilde{\ell}\in [L]$ and $j\in [m_{\tilde{\ell}}]$. The proof when we suppose that $a_j^{(\tilde{\ell})}+b_j^{(\tilde{\ell})}<U$, for some $\tilde{\ell}\in [L]$ and $j\in [m_{\tilde{\ell}}]$, is similar, and is omitted. Let $\delta =  (a_j^{(\tilde{\ell})}+b_j^{(\tilde{\ell})})-U$. By setting $b_j^{(\tilde{\ell})}\gets b_j^{(\tilde{\ell})}-\delta$, we observe that $\beta^{(d)}(\overline{f})$ remains unchanged, while $\eta^{(d)}(\overline{f})$ either remains unchanged or strictly decreases by $\delta>0$. Moreover, for $d\geq 2$, note from Proposition \ref{lem:worst-case} that increasing any $a_j^{(\ell)}$ from $0$ to a positive value can only strictly increase $E_{\overline{f}}$.
\end{proof}
Hence, to obtain the explicit structure of an estimator $\overline{f}$ that minimizes $E_{\overline{f}}$, it suffices to focus estimators $\overline{f}_{\left\{a_j^{(\ell)},b_j^{(\ell)}\right\}}$ with $a_j^{(\ell)}+b_j^{(\ell)} = U$, for all $\ell\in [L]$, $j\in [m_\ell]$. Then, for $d=1$, there exists an estimator $f^\star$ minimizing $E^{(1)}_{\overline{f}}$ that satisfies
\begin{equation}
	\label{eq:optsimp1}
	E^{(1)}_{f^\star} = \frac{1}{\sum_{\ell'\leq L} m_{\ell'}}\cdot \left(\sum_{\ell\leq L} S^{(\ell)}+\frac{1}{\varepsilon}\cdot \max_{\ell\leq L} \left(Um_\ell - 2S^{(\ell)}\right)\right),
\end{equation}
where for $\ell\in [L]$, we let $S^{(\ell)}:= \sum_{j\leq m_\ell} a_j^{(\ell)}$, for any given estimator $\overline{f}$. Furthermore, for any $d\geq 2$, there exists an estimator $f^\star$ minimizing $E^{(d)}_{\overline{f}}$ that satisfies
\begin{equation}
	\label{eq:optsimpd}
	E^{(d)}_{f^\star} = \frac{1}{\sum_{\ell'\leq L} m_{\ell'}}\cdot \left(\sum_{\ell\leq L} \overline{S}^{(\ell)}+\frac{2d}{\varepsilon}\cdot \max_{\ell\leq L} \left(Um_\ell - \overline{S}^{(\ell)}\right)\right),
\end{equation}
where for $\ell\in [L]$, we let $\overline{S}^{(\ell)}:= \sum_{j\leq m_\ell} \left(U - b_j^{(\ell)}\right)$, for any given estimator $\overline{f}$. We now prove Theorem \ref{thm:opt}.

\begin{proof}[Proof of Thm. \ref{thm:opt}]
	We prove the theorem in detail for the case when $d=1$; the setting of $d\geq 2$ is quite similar and the proof is hence is omitted. We begin with the expression in \eqref{eq:optsimp1}; note that our task now is simply to identify the optimal parameters $\{S^{(\ell)}:\ \ell\in [L]\}$ of $f^\star$ in \eqref{eq:optsimp1}. Once these parameters are derived, we simply set $a_j^{(\ell)}:= S^{(\ell)}/m_\ell = U-b_j^{(\ell)}$ (via Lemma \ref{lem:helper}), for each $\ell\in [L]$ and $j\in [m_\ell]$. In the expression in \eqref{eq:optsimp1}, let us set $\tau:= \max_{\ell\leq L} \left(Um_\ell - 2S^{(\ell)}\right)$. We hence need to solve the following constrained optimization problem:
	\begin{align}
		&{\text{minimize}}\quad h(\{S^{(\ell)}\}):= \left(\sum_{\ell\leq L} S^{(\ell)}+\frac{\tau}{\varepsilon}\right)\notag\\
		&\text{subj. to:}\ \  Um_\ell-2S^{(\ell)}\leq \tau,\ S^{(\ell)}\geq 0,\ \forall\ \ell\in [L],\ \tau\geq 0.\label{eq:opt}
	\end{align}
The optimization problem in \eqref{eq:opt} is a linear programming problem. By standard arguments via the necessity of the KKT conditions \cite[Sec. 5.5.3]{boyd}, there must exist reals $\lambda_\tau\geq 0$ and $\lambda_\ell,\ \mu_\ell\geq 0$, for each $\ell\in [L]$ (or Lagrange multipliers), such that the function 
\begin{align*}
\mathcal{L}(\{S^{(\ell)}\},\ &\lambda_\tau,\{\lambda_\ell,\mu_\ell\}):= \sum_{\ell\leq L} S^{(\ell)}+\frac{S}{\varepsilon}-\lambda_\tau \tau\ + \\
&\sum_\ell \lambda_\ell\cdot \left(Um_\ell-2S^{(\ell)}-S\right)-\sum_\ell \mu_\ell S^{(\ell)}
\end{align*}
obeys the following properties.
\begin{itemize}
	\item \textbf{Stationarity}: We have that $\frac{\partial \mathcal{L}}{\partial \tau} = 0$, or $\lambda_\tau+\sum_\ell \lambda_\ell = \frac{d}{\varepsilon},$
and that $\frac{\partial \mathcal{L}}{\partial S^{(\ell)}} = 0$, for each $\ell\in [L]$, or $\lambda_\ell = \frac{1-\mu_\ell}{2}.$
\item \textbf{Complementary slackness}: We have that $$\lambda_\tau \tau= 0,\ \lambda_\ell\cdot \left(Um_\ell-2S^{(\ell)}-\tau\right) = 0,\ \text{and}\ \mu_\ell S^{(\ell)} = 0,$$
for all $\ell\in [L]$.
\end{itemize}
We claim that the assignment $\tau^\star = T_\varepsilon$ and $S^{(\ell),\star} = \max\left\{\frac{Um_\ell-T_\varepsilon}{2},0\right\}$ satisfies the conditions above, for an appropriate choice of $\lambda_\tau,\{\lambda_\ell,\mu_\ell\}$ values. Indeed, for $\ell\in [L]$, if $S^{(\ell),\star} = 0$, we set $\lambda_\ell = 0$ and $\mu_\ell = 1$; else, we set $\lambda_\ell = \frac12$ and $\mu_\ell = 0$. 

\end{proof}
Observe that from Theorem \ref{thm:opt}, the optimal worst-case error when $d=1$ is \begin{equation}E^\text{OPT, $(1)$}(\varepsilon)= \frac{1}{\sum_{\ell'\leq L} m_{\ell'}}\cdot\left(\sum_{\ell\leq L} \max\left\{\frac{Um_\ell-T_\varepsilon}{2},0\right\}+\frac{T_\varepsilon}{\varepsilon}\right). \label{eq:eopt1}\end{equation} %
and the the optimal worst-case error for any $d\geq2$ is such that 
\begin{align}& E^\text{OPT, $(d)$}(\varepsilon) \notag\\
	&= \frac{1}{\sum_{\ell'\leq L} m_{\ell'}}\cdot \left( \sum_{\ell\leq L} \max\left\{{Um_\ell-T_\varepsilon},0\right\}+\frac{2dT_\varepsilon}{\varepsilon}\right). \label{eq:eoptd}\end{align} %
We end this section with a couple of remarks. First, observe from \eqref{eq:eopt1} and \eqref{eq:eoptd} that, in the limit as $\varepsilon\to 0$, the optimal worst-case error equals the worst-case bias error, which in turn equals $U/2$, for $d=1$, and $U$, when $d\geq 2$.

Next, consider the special case of Theorem \ref{thm:opt} when $d=1$, with the interpretation that each sample $x_j^{(\ell)}\in [0,U]$. Here, the annulus $\mathsf{A}_{a_j^{(\ell)},b_j^{(\ell)}}$ is simply the interval $[a_j^{(\ell)}, b_j^{(\ell)}]\subseteq [0,U]$. This implies that for the case when the samples $\mathbf{x}_j^{(\ell)}$ are allowed to take values in the cube $[0,U]^d$, as against in the $\ell_1$-ball $\lVert \mathbf{x}_j^{(\ell)}\rVert_1\leq U$, one can perform the bounding procedure, using the \emph{same} interval $[a_j^{(\ell)}, b_j^{(\ell)}]$ identified via Theorem \ref{thm:opt} for the $d=1$ setting, for each dimension, independently.

\section{Numerical Experiments}
\label{sec:experiments}
In this section, we compare, via numerical experiments, the performance of our clipping strategy for the case when $d=1$, with the widely used strategy in \cite[Sec. 3]{amin}, which we call the ``AKMV" mechanism\footnote{The subscript ``AKMV'' stands for the initials of the last names of the authors of \cite{amin}.}. We straightforwardly adapt the mechanism in \cite{amin} from the item-level setting to the user-level setting, by considering the contribution of a user to be the sum $\sum_{j\leq m_\ell} x_j^{(\ell)}$ of the samples it contributes in the user-level setting.

Importantly, the AKMV mechanism uses a privacy budget of $\varepsilon/2$ to first \emph{privately} estimate the $\left \lceil \left(\frac{2}{\varepsilon}\right)\right \rceil^{\text{th}}$-largest sample, and then uses this value as the clipping threshold. Via entirely heuristic analysis, the authors of \cite{amin} argue the ``optimality" of AKMV mechanism. However, importantly, their analysis ignores the error in the private estimation of the the $\left \lceil \left(\frac{2}{\varepsilon}\right)\right \rceil^{\text{th}}$-largest sample, rendering such a claim of optimality incorrect. As we shall see, the clipping strategy in \cite[Sec. 3]{amin} performs quite poorly in comparison to our worst-case-error-optimal strategy, in the average case too, when the data samples are drawn i.i.d. from natural distributions.

\subsection{Experimental Setup}
\label{sec:exp-setup}
A natural application of user-level DP mechanisms is to spatio-temporal datasets; similar to previous work \cite{dp_spcom,dp_preprint}, we let $U = 65$, in line with the largest speed of buses in km/hr, in Indian cities. We consider two collections of numbers of user contributions $\{m_\ell:\ \ell\in [L]\}$. 
\begin{enumerate}
	\item Geometric collection: Here, we fix an integer $M$ and consider $L = 2^{M+1}-1$ users; for each $i\in \{0,1,\ldots,M\}$, we create $2^i$ users each contributing $2^{M-i}$ samples. 
	In our experiments, we set $M = 6$.
	\item Extreme-valued collection: Here, we consider $L$ users where $L-1$ users contribute one sample each and $1$ user contributes $m^\star>1$ samples. 
	In our experiments, we pick $L = 101$, with $m^\star = 10$.
\end{enumerate}
We also work with the following synthetically generated datasets with i.i.d. samples:
\begin{enumerate}
	\item Uniform samples: Each sample $x_j^{(\ell)}\stackrel{\text{i.i.d}}{\sim} \text{Unif}((0,U])$, across $\ell\in [L]$, $j\in [m_\ell]$. 
	\item Projected Gaussian samples: Each sample $x_j^{(\ell)}$, $\ell\in [L]$, $j\in [m_\ell]$, is drawn i.i.d. by rejection sampling from the $\mathcal{N}(U/2,U/4)$ distribution so that the samples lie in $(0,U]$. 
\end{enumerate}
\subsection{Performance on Synthetic Datasets}
 
We compare the performances of the following three mechanisms on i.i.d. synthetic datasets:
\begin{enumerate}
	\item the vanilla Laplace mechanism that releases $M_{\text{Lap}}(\mathcal{D}) = f(\mathcal{D})+\mathbf{Z}_1$, where $\mathbf{Z}_1\sim \text{Lap}\left(\frac{\Delta_f}{\varepsilon}\right)$;
	\item the ``OPT-worst-case" mechanism that releases $M_{\text{OPT-wc}}(\mathcal{D}) = f^\star(\mathcal{D})+\mathbf{Z}_2$, where $\mathbf{Z}_2\sim \text{Lap}\left(\frac{\Delta_{f^\star}}{\varepsilon}\right)$; and
	\item the ``AKMV" mechanism \cite[Sec. 3]{amin} that releases $M_{\text{AKMV}}(\mathcal{D}) = \widehat{f}_T(\mathcal{D})+\mathbf{Z}_3$, where $\mathbf{Z}_3\sim \text{Lap}\left(\frac{2\Delta_{\widehat{g}_T}}{\varepsilon}\right)$, and $\widehat{f}_T(\mathcal{D})$ is obtained by clipping the sum of samples of each user to lie in $[0,T]$. Here, $T$ is the estimate of the $\left \lceil \left(\frac{2}{\varepsilon}\right)\right \rceil^{\text{th}}$-largest sample among $\{\sigma^{(\ell)}\}$, which is estimated with privacy budget $\varepsilon/2$.
\end{enumerate}

The`` average-case" errors on i.i.d. datasets are defined as
$$\overline{E}_\text{Lap} = \E[|\mathbf{Z}_1|],\ \overline{E}_\text{OPT-wc} = \E[|M_{\text{OPT-wc}}(\mathcal{D}) - f(\mathcal{D})|],$$
and
$$ \overline{E}_\text{AKMV} = \E[|M_{\text{AKMV}}(\mathcal{D}) - f(\mathcal{D})|],$$
where the expectations are over the randomness in the data samples and in the DP mechanism employed. These errors 
are then estimated via Monte-Carlo averaging over $10^4$ iterations. As a pre-processing step, we replace each of the samples $\{x_j^{(\ell)}:\ j\in [m_\ell]\}$ of every user $\ell \in [L]$ by the sample average $\frac{1}{m_\ell}\cdot \sum_{j\leq m_\ell} x_j^{(\ell)}$, so as to allow for improved performance post clipping. 

Figure \ref{fig:avg-geo} and \ref{fig:avg-extreme-normal} (with the error axis on a log-scale), respectively, show plots of the (estimates of) the average-case errors for the three mechanisms above for the geometric collection of $\{m_\ell\}$ values and uniform samples, and for the extreme-valued collection with projected Gaussian samples. Interestingly, the average-case performance of the AKMV mechanism is quite similar to, and a little worse than, the average-case performance of the vanilla Laplace mechanism, which in turn is significantly worse than the average-case error incurred by the OPT-worst-case mechanism. A reason for the poor performance of the AKMV mechanism overall is due to the reduced privacy budget allocated to the private release of the clipped estimator. Moreover, in contrast to the OPT-worst-case mechanism, the AKMV mechanism requires the private estimation of the $\left \lceil \left(\frac{2}{\varepsilon}\right)\right \rceil^{\text{th}}$ quantile value, which incurs additional error. We expect that similar trends can be observed for i.i.d. datasets with samples drawn from other distributions of $\{m_\ell\}$ values and samples. We add that while the clipping error remains roughly the same for small changes in $\varepsilon$, there exist certain values of $\varepsilon$ that give rise to sharp discontinuities in the $T_\varepsilon$ values; such behaviour hence results in the (estimates of the) average-case errors not being monotonic in $\varepsilon$, in Figure \ref{fig:avg-geo}.

\begin{figure}[!t]
	\centering
	\includegraphics[width = \linewidth]{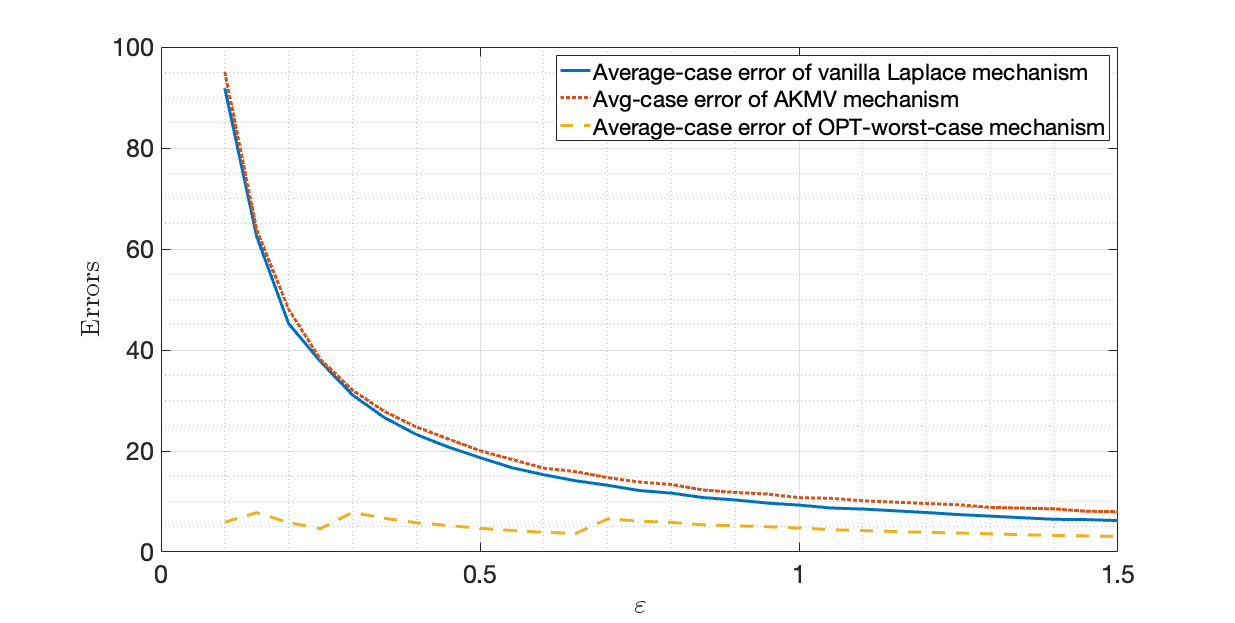}
	\caption{Average-case errors using a geometric collection of $\{m_\ell\}$ values and uniform samples}
	\label{fig:avg-geo}
\end{figure}
%

\begin{figure}[!t]
	\centering
	\includegraphics[width = \linewidth]{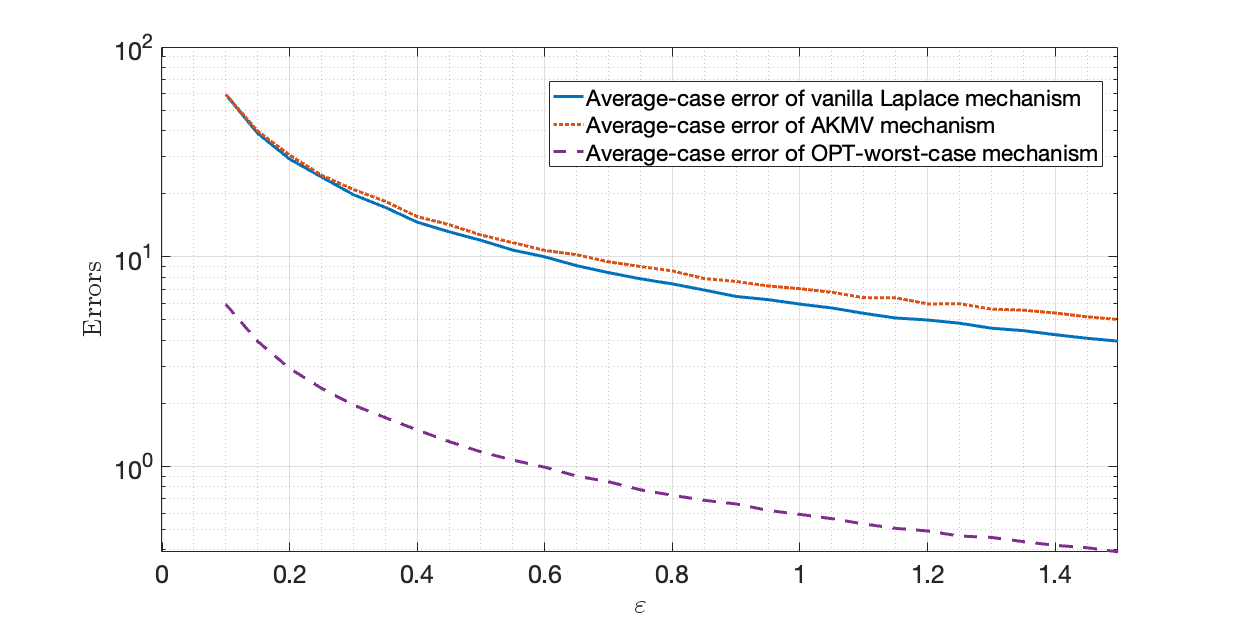}
	\caption{Average-case errors using an extreme-valued collection of $\{m_\ell\}$ values and projected Gaussian samples}
	\label{fig:avg-extreme-normal}
\end{figure}
\section{Conclusion}
\label{sec:conclusion}
In this paper, we revisited the fundamental problem of releasing the sample mean under user-level differential privacy (DP) -- a problem that is well-studied in the DP literature, but typically in the presence of additional (e.g., i.i.d.) assumptions on the distribution of data samples. In our work, we make no distributional assumptions on the dataset; instead, we adopt a worst-case approach to studying the estimation error. We then explicitly characterize this error for a broad class of strategies for bounding user contributions by some combination of clipping the numbers of user contributions and/or clipping the sample values themselves. Our analysis allows us to identify the bounding strategy that is optimal with respect to our worst-case error metric. Via numerical experiments, we demonstrate that our strategy is not only optimal for the worst-case error, but also performs much better than the well-known strategy in \cite{amin} for datasets with i.i.d. samples.

An interesting line of future work will be to extend our worst-case error analysis to the user-level DP release of other statistics, and identify the optimal bounding strategies in those cases as well.

\bibliographystyle{IEEEtran}
{\footnotesize
	\bibliography{references.bib}}
\appendix
\section{Characterizing $\ell_1$-Projections Onto $\Delta_\alpha$}
\label{app:projection}
In this section, we argue that the condition in the second statement of Lemma \ref{lem:proj1} is sufficient for the vector $\mathbf{y}$ to be an $\ell_1$-projection of $\mathbf{a}\in \Delta_U\setminus \Delta_\alpha$, for $\alpha\leq U$.

Following Lemma \ref{lem:proj1}, consider the collection $\mathcal{S}\subseteq \mathbb{R}^d$ of points defined as $\mathcal{S}:= \{\mathbf{z}\in \delta_\alpha:\ z_i\leq a_i,\ \text{for all $i\in [d]$}\}$. Note that  when $a_i>\alpha$, for all $i\in [d]$, we have $\mathcal{S} = \delta_\alpha$. Observe that $\mathcal{S}$ is a convex subset of $\mathbb{R}^d$; hence, by a version of Carath\'eodory's Theorem (see the remark after \cite[Thm. 15.3.5]{cover_thomas} and \cite{carath}), any point $\mathbf{z}$ in $\mathcal{S}$ can be written as a convex combination of finitely many, in particular, $d$ points in $\mathcal{S}$. Hence, consider any such collection $\mathcal{Z} = \{\mathbf{z}_1,\ldots,\mathbf{z}_d\}$ whose convex hull equals $\mathcal{S}$. The following claim then holds.


\begin{proposition}
	\label{prop:proj1}
	For any $\mathbf{z}\in \mathcal{S}$, we have that $\lVert \mathbf{a} - \mathbf{z}\rVert_1$ is a constant.
\end{proposition}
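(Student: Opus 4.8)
The plan is to exploit the fact that membership in $\mathcal{S}$ pins down the sign of each coordinate difference $a_i - z_i$, which collapses the $\ell_1$-distance into a linear quantity. First I would record the coordinatewise constraints: since $\mathbf{z}\in\mathcal{S}\subseteq\delta_\alpha$ we have $z_i\geq 0$ and $\sum_{i\leq d} z_i = \alpha$, and by definition of $\mathcal{S}$ we also have $z_i\leq a_i$ for every $i\in[d]$; moreover $a_i\geq 0$ because $\mathbf{a}\in\Delta_U$. In particular $0\leq z_i\leq a_i$ for all $i$, so $a_i - z_i\geq 0$.

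Next I would simply compute
\[
\lVert \mathbf{a}-\mathbf{z}\rVert_1 = \sum_{i\leq d} |a_i - z_i| = \sum_{i\leq d} (a_i - z_i) = \lVert \mathbf{a}\rVert_1 - \lVert \mathbf{z}\rVert_1 = \lVert \mathbf{a}\rVert_1 - \alpha,
\]
where the first equality removes the absolute values using $a_i\geq z_i$, and the last step uses $\sum_i z_i = \alpha$ together with $\sum_i a_i = \lVert\mathbf{a}\rVert_1$ (valid since $a_i\geq 0$). The right-hand side does not depend on $\mathbf{z}$, which is exactly the claim. (One may also note that since $\mathbf{a}\notin\Delta_\alpha$ this common value $\lVert\mathbf{a}\rVert_1-\alpha$ is strictly positive, consistent with the necessary condition in Lemma \ref{lem:proj1}.)

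There is essentially no obstacle here: the only thing to be careful about is justifying that the constraints defining $\mathcal{S}$ genuinely force $0\leq z_i\leq a_i$ in every coordinate so that all absolute values can be dropped simultaneously — but this is immediate from the definitions of $\delta_\alpha$ and $\mathcal{S}$. The Carathéodory decomposition mentioned in the surrounding text is not needed for this particular proposition; it is only used afterwards to combine this fact with convexity of $\mathcal{S}$ when identifying all $\ell_1$-projections.
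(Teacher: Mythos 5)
Your proof is correct, and it reaches the same identity $\lVert\mathbf{a}-\mathbf{z}\rVert_1=\lVert\mathbf{a}\rVert_1-\alpha$ as the paper's proof, via the same two ingredients: dropping the absolute values using $z_i\leq a_i$ (from the definition of $\mathcal{S}$) and then using $\sum_i z_i=\alpha$ (from $\mathcal{S}\subseteq\delta_\alpha$). The only difference is that the paper first writes $\mathbf{z}=\sum_k\lambda_k\mathbf{z}_k$ as a convex combination of the Carath\'eodory points and shows the distance is independent of the weights $\{\lambda_k\}$, whereas you evaluate $\sum_i(a_i-z_i)$ directly. Your observation that the Carath\'eodory decomposition is not needed for this proposition is accurate: since every $\mathbf{z}\in\mathcal{S}$ already satisfies $\sum_i z_i=\alpha$, the decomposition adds nothing to the computation, and your more direct version is the cleaner of the two.
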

\begin{proof}
	Recall that by the version of Carath\'eodory's Theorem above, any point $\mathbf{z}\in \mathcal{S}$ can be written as $\mathbf{z} = \sum_{k=1}^d \lambda_k \mathbf{z}_k$, where $\lambda_k\geq 0$, with $\sum_{k\leq d} \lambda_k = 1$.  It suffices to show that $\lVert \mathbf{a} - \mathbf{z}\rVert_1$ is independent of $\{\lambda_k\}$. To see this, note that
	\begin{align*}
		\lVert \mathbf{a} - \mathbf{z}\rVert_1&= \sum_{r\leq d} (a_r-z_r)\\
		&=\sum_{r\leq d} (a_r-\sum_{k\leq d} \lambda_k z_{k,r})\\
		&= \sum_{r\leq d} a_r - \sum_{k\leq d} \lambda_k\cdot \sum_{r\leq d} z_{k,r} = \lVert \mathbf{a}\rVert_1 - \alpha,
	\end{align*}
which is independent of $\{\lambda_k\}$.
\end{proof}
We hence have that the condition in Lemma \ref{lem:proj1} is sufficient for the vector $\mathbf{y}$ in its statement to be an $\ell_1$-projection of $\mathbf{a}$. In practice, given a vector $\mathbf{a}\in \Delta_U\setminus \Delta_\alpha$, one can use the vector $\mathbf{y} = \frac{\alpha}{\lVert \mathbf{a}\rVert_1} \mathbf{a}$ as an $\ell_1$-projection.
\end{document}